\renewcommand{\algocf@captiontext}[2]{#1\algocf@typo. \AlCapFnt{}#2} 
\def\@algocf@capt@plain{top}
\renewcommand{\algocf@makecaption}[2]{%
  \addtolength{\hsize}{\algomargin}%
  \sbox\@tempboxa{\algocf@captiontext{#1}{#2}}%
  \ifdim\wd\@tempboxa >\hsize
    \hskip .5\algomargin%
    \parbox[t]{\hsize}{\algocf@captiontext{#1}{#2}}
  \else%
    \global\@minipagefalse%
    \hbox to\hsize{\box\@tempboxa}
  \fi%
  \addtolength{\hsize}{-\algomargin}%
}
\newcommand{\bY}{ {\bf Y} }
\newcommand{\sX}{ { \mathbb{ X}} }
\newcommand{\by}{ {\bf y} }
\newcommand{\bz}{ {\bf z} }
\newcommand{\btheta}{ {\boldsymbol \theta} }
\DeclareMathOperator*{\essinf}{ess\,inf}
\DeclareMathOperator*{\esssup}{ess\,sup}
\newcommand{\R}{\mathbb{R}}
\newcommand{\I}{\mathbb{I}}
\newtheorem{assumption}{Assumption}
\newtheorem{example}{Example}
\newtheorem{remark}{Remark}
\newtheorem{lemma}{Lemma}
\newtheorem{theorem}{Theorem}
\newtheorem{proposition}{Proposition}
\newtheorem{corollary}{Corollary}
\title{On the fundamental limitations of multiproposal Markov chain Monte Carlo algorithms}
\date{\vspace{-5ex}}
\author{Francesco Pozza\thanks{Bocconi Institute for Data Science and Analytics, Bocconi University, Milan, Italy. \texttt{francesco.pozza2@unibocconi.it}}\;  and Giacomo Zanella\thanks{
Department of Decision Sciences and Bocconi Institute for Data Science and Analytics, Bocconi University, Milan, Italy. \texttt{giacomo.zanella@unibocconi.it}\\
\emph{Both authors acknowledge support from the European Union (ERC), through the Starting grant `PrSc-HDBayLe', project number 101076564.}}
}
\begin{document}

\maketitle

\begin{abstract}
We study multiproposal Markov chain Monte Carlo algorithms, such as Multiple-try or generalised Metropolis-Hastings schemes, which have recently received renewed attention due to their amenability to parallel computing.
First, we prove that no multiproposal scheme can speed-up convergence relative to the corresponding single proposal scheme by more than a factor of $K$, where $K$ denotes the number of proposals at each iteration.
This result applies to arbitrary target distributions and it implies that serial multiproposal implementations are always less efficient than single proposal ones. 
Secondly, we consider log-concave distributions over Euclidean spaces, 
proving that, in this case, the speed-up is at most logarithmic in $K$, which implies that even parallel multiproposal implementations are fundamentally limited in the computational gain they can offer.
Crucially, our results apply to arbitrary multiproposal schemes and purely rely on the two-step structure of the associated kernels (i.e.\ first generate $K$ candidate points, then select one among those). Our theoretical findings are validated through numerical simulations. 
\vspace{9pt}
\\
\textbf{Keywords:} Parallel computing; Multiple-try Metropolis; Peskun ordering; Spectral gap; Log-concave sampling
\end{abstract}

\section{Introduction}
Multiproposal Markov chain Monte Carlo (MP-MCMC) algorithms \citep{frenkel1996understanding,liu2000multiple,tjelmeland2004using} have been the subject of considerable methodological and applied interest over the last two decades. Compared to classical single proposal schemes \citep{hastings1970monte}, MP-MCMC schemes consider multiple candidate states at each Markov chain iteration in order to speed up convergence to stationarity.
While this improvement in speed comes 
with higher computational cost per iteration, the computations associated to different proposed states can be performed in parallel, thus offering a promising way to deploy parallel computing to speed up MCMC convergence \citep{calderhead2014general,holbrook2023generating,gagnon2023improving}.

Despite their popularity, however, fundamental properties of MP-MCMC methods, such as the effective gain obtained by increasing the number of proposed states, are still not well understood. Indeed, the complications related to the multiproposal structure has mostly limited the availability of rigorous results to simplified settings (see e.g.\ \citet{yang2023convergence} and references therein). Moreover, unlike for the single proposal case, in the multiproposal context there is not a single selection rule that is uniformly optimal \citep{tjelmeland2004using}.
This makes it harder to disentangle features that are due to specific design choices \citep{craiu2007acceleration,pandolfi2010generalization,bedard2012scaling,martino2017issues,luo2019multiple,fontaine2022adaptive,gagnon2023improving} as opposed to ones that are fundamental to the MP-MCMC framework.

In this paper we provide rigorous theoretical results that apply to arbitrary MP-MCMC schemes, regardless of the specific algorithmic variant. In this way, we contribute to understand the intrinsic and fundamental properties and limitations of the MP-MCMC framework.
Specifically, we quantify the best possible speed-up of MP-MCMC schemes, relative to single proposal ones, as a function of the number $K \in \{1,2,\dots\}$ of proposed states (which, in a parallel computing context, corresponds to the number of parallel workers). Our first result is a Peskun ordering comparison (Theorem \ref{theorem:peskun}), which implies that the multiproposal speed-up can never be higher than $K$. The result is surprisingly simple and general, both in terms of target distribution, state space and algorithmic variant. This result implies that \emph{serial} implementations of MP-MCMC cannot improve over single proposal schemes in terms of resulting efficiency. However, it does not clarify if and when \emph{parallel} implementations of MP-MCMC are effective (see e.g.\ discussion after Corollary \ref{coroll:peskun_gap}). In Sections \ref{sec:neg2}-\ref{sec:neg3}, we thus make more concrete assumptions in order to give sharper results. 
In particular, we focus on log-concave target distributions defined on Euclidean spaces, showing that the resulting multiproposal speed-up (measured by the increase in the spectral gap of the associated Markov chain) is at most poly-logarithmic in $K$ (Theorem \ref{thm:GRW}). This result suggests that, at least for the framework we analyse, MP-MCMC schemes are fundamentally limited in the speed-up they can provide.
Notably, our results apply jointly to all MP-MCMC schemes, regardless of the specific rule used to select points (e.g.\ multiple-try with or without locally-balanced weights, generalised Metropolis-Hastings, antithetic proposals, etc), thus suggesting that better ways to speed up MCMC with parallel computing should be sought outside of the multiproposal framework. 
The results are illustrated by a simulation study on high-dimensional logistic regression (Section \ref{sec:sim}), which confirms the theoretical findings. All proofs are provided in the Appendix.

\section{Transition kernels based on multiple proposals} \label{sec:review}
Given a probability measure of interest $\pi$ on a sample space $\sX$, we consider MCMC schemes that generate $\pi$-reversible Markov chains according to the following structure: at each iteration, $K\geq 1$ candidate points are generated, subsequently either one of them is accepted as a new value or the chain remains at the current position. 
This general structure is described in Algorithm \ref{alg:meta}.
The algorithm depends on two key parameters: a Markov transition kernel $Q$ from $\sX$ to $\sX^K$, which describes the mechanism used to propose the $K$ points $ \by_{1:K}=( y_1,\dots, y_K)$, and a function $h$ from $\sX\times \sX^{K}$ to $S_{K}=\{ (s_1,\dots,s_K)\in[0,1]^{K}\,:\,\sum_{i = 1}^K s_i \leq 1\}$, where $h_i( x,  \by_{1:K})$ indicates the probability of selecting the $i$-th proposed point as a next state.
Crucially, we make no assumption on the specific form of $h$, which could be either given analytically or implicitly defined through some algorithmic procedure.
\begin{algorithm}
\caption{General multiproposal MCMC algorithm}\label{alg:meta}
\KwIn{$X_0 \in \sX$, $Q$ Markov transition kernel from $\sX$ to $\sX^{K}$, function $h:\sX^{1+K}\to S_K$}
\For{$t = 0,1,\dots$ 
}{
Given $X_t= x$, sample $ \by_{1:K}=( y_1,\dots, y_K) \sim Q( x,\cdot)$\\
Select $X_{t+1}$ from $( x, y_1,\dots, y_K)$ with probabilities $(1-\sum_{i=1}^K h_i( x,  \by_{1:K}),h_1( x,  \by_{1:K}), \dots,h_K( x,  \by_{1:K}))$\\
}
\end{algorithm}

Equivalently, 
denoting the indicator function of $A\subseteq \sX$ with $\mathbf{1}_{A }(\cdot)$, 
we consider Markov transition kernels of the form
\begin{align}  \label{kernel:meta:1}
P^{(K)}( x,A) 
&=
\sum_{i = 1}^K \int_{\sX^{K}} \mathbf{1}_{A}( y_i)h_i(  x,  \by_{1:K})Q(  x, d \by_{1:K}),
&A \subseteq \sX,\,x \notin A 
,
\end{align}
and $P^{(K)}( x, \{ x\}) = 1- P^{(K)}( x,\sX \backslash\{ x\})$.
The equivalence between Algorithm \ref{alg:meta} and the kernel representation in \eqref{kernel:meta:1} is shown in Proposition \ref{prop:equi:kern:MA} in the appendix.

When $K=1$, Algorithm \ref{alg:meta} reduces to classical accept-reject kernels \citep{tierney1998note}, and the class of functions $h$ that ensure $\pi$-reversibility of $P^{(K)}$ is well-understood and studied.
In particular, regardless of the choice of $Q$, it is well-known that the Metropolis-Hastings (MH) function, $h( x, y)=\min\{1,\pi(d y)Q( y,d x)/(\pi(d x)Q( x,d y))\}$, is optimal in the Peskun sense \citep{peskun1973optimum}.

When $K>1$, instead, designing practical choices of $h$ that ensure $\pi$-reversibility is more challenging, and there is no choice of $h$ uniformly dominating all others \citep{tjelmeland2004using}.
For this reason, various choices of $h$ have been proposed in the literature, potentially in combination with specific choices of $Q$, resulting in several different MP-MCMC schemes. 
Examples \ref{ex:MTM} and \ref{ex:GMH} below describe two popular options, which are both special cases of Algorithm \ref{alg:meta}.

\begin{example}[Multiple-try Metropolis]\label{ex:MTM}
Let $Q_i( x,\cdot)$ and $\bar{Q}_i^{ y_i}( x, \cdot)$
denote, respectively, the marginal distribution of $Y_i$ and the conditional distribution of $\bY_{-i}=(Y_j)_{j\neq i}$ given $Y_i =  y_i$, under $\bY_{1:K}\sim Q( x,\cdot)$.
Given the current and proposed points, $ x \in \sX$ and $ \by_{1:K}\in\sX^{K}$, Multiple-try Metropolis (MTM) methods \citep{frenkel1996understanding,liu2000multiple} select the new state $X_{t+1}$ according to the following procedure:\vspace{-2mm}
\begin{enumerate}
    \item[(i)] Sample $i$ from $\{1,\dots,K\}$ with probabilities proportional to $\{w_i( x,  y_i)\}_{i=1}^K$, where $w_i:\sX^2\to\R^+$ for $i=1,\dots,K$ are arbitrarily chosen weight functions,
    \item[(ii)] Sample $ y'_{-i} \sim \bar{Q}^{ x}_i( y_i, \cdot)$ and set $X_{t+1} =  y_i$ with probability
   $$
   \min \Big( \, 1\, , \,  \frac{
   \pi(d y_i)Q_i( y_i,d x) w_i( y_i,  x)/\big\lbrace w_i( y_i, x)+\sum_{j \neq i} w_j( y_i, y'_j)\big\rbrace 
   }{
   \pi(d x)Q_i( x,d y_i) w_i( x,  y_i)/\big\lbrace \sum_{j \in \{1, \dots, K\}}  w_j( x, y_j) \big\rbrace   } \Big)\,,$$
  otherwise set $X_{t+1} =  x$\,.
\end{enumerate}
\end{example}

\begin{example}[Tjelmeland's proposal]\label{ex:GMH}
\citet{tjelmeland2004using} proposes defining $h_i$ as
\begin{equation} \label{acc:pr:tj}
h_{i}( x, \by_{1:K}) = \frac{\pi( y_{i})q( y_{i},( x, \by_{-i})) 
}{
\pi( x)q( x, \by_{1:K})
+
\sum_{j =1}^{K} 
\pi( y_{j})q( y_{j},( x, y_{-j}))
}\,,
\end{equation}
where $\pi$ and $q(x,\cdot)$ are the density of, respectively, $\pi$ and $Q(x,\cdot)$, with respect to some dominating measures $\mu$ and its $K$-th power $\mu^K$, 
and
$
( x, \by_{-i})=( y_1,\dots, y_{i-1}, x, y_{i+1},\dots, y_{K})
$.
The expression in \eqref{acc:pr:tj} makes the kernel $P^{(K)}$ $\pi$-reversible for any choice of $Q$. 
However, 
in order to avoid $h_{i}( x, \by_{1:K})$ being too small, 
it is common to choose $Q$ so that
$q( y_{i},( x, \by_{-i}))=q( x, \by_{1:K})$ for all $( x, \by_{1:K})\in\sX^{1+K}$, in which case \eqref{acc:pr:tj}  simplifies to
$h_{i}( x, \by_{1:K}) = 
\pi( y_{i})/(\pi( x)
+
\sum_{j =1}^{K} 
\pi( y_{j})
)$.
For example, when $\sX=\R^d$, one option is to
generate $ \by_{1:K}\sim Q( x,\cdot)$ by first sampling $z\sim N( x,(\sigma^2/2)\I_d)$ and then $ y_i\sim N(z,(\sigma^2/2)\I_d)$ independently for $i=1,\dots,K$, where $\I_d$ denotes the $d\times d$ identity matrix and $N(\cdot,\cdot)$ denotes multivariate Gaussian distributions, so that $q( y_{i},( x, \by_{-i}))=q( x, \by_{1:K})$ is satisfied.
\end{example}

Many variants of MTM have been proposed in the literature, such as versions with antithetic proposals  \citep{craiu2007acceleration} and locally-balanced weights \citep{gagnon2023improving}. 
Similarly, Example \ref{ex:GMH} is a special case of a broad class of MP-MCMC schemes called Generalised Metropolis-Hastings (GMH) \citep{tjelmeland2004using, calderhead2014general, holbrook2023generating}. We refer to Section \ref{sec:review_MP_MCMC} of the appendix for a more detailed review of MP-MCMC methods.

\subsection{Multiproposal MCMC and parallel computing}\label{sec:par_comp}
In many contexts, the leading computational cost for each iteration of MP-MCMC methods lies in the multiple target evaluations, namely evaluating $\{\pi( y_i)\}_{i=1}^K$ for Example \ref{ex:GMH} and 
$\{\pi( y_i),\pi( y'_i)\}_{i=1}^{K}$
for Example \ref{ex:MTM}.
In those situations, each MP-MCMC iteration costs (roughly) $K$ times more than one iteration of standard MH, since the latter requires only a single evaluation of $\pi$ at each iteration. 
On the other hand, the multiproposal structure allows to perform these $K$ (or $2K-1$) likelihood computations in parallel, so that the effective runtime required to sample from $P^{(K)}$ can be much less than $K$ times the one of MH (potentially even being of the same order).
Thus, since $P^{(K)}$ converges faster to stationarity as $K$ increases, multiproposal schemes offer a way to exploit parallel computing to speed-up convergence of MCMC. Relative to the classical use of parallel computing in MCMC, where multiple independent chains are run in parallel \citep{rosenthal2000parallel}, this approach has the potential to reduce the time required to reach stationarity (i.e.\ the so-called called burn-in or warm-up phase), thus being particularly appealing in contexts where $\pi$ is expensive to evaluate and/or the available runtime (i.e.\ maximum number of possible iterations)
is limited. See e.g.\ \citet{calderhead2014general,gagnon2023improving,holbrook2023generating} for more discussion. 
However, in order to understand if and when the resulting trade-off is convenient, one needs to quantify the speed-up of $P^{(K)}$ as a function of $K$, which is what we focus on in the next section.

Note that the above considerations apply also to cases where $Q$ incorporates gradient information.
For example, if $\sX=\R^d$ and $Q(x,d\by_{1:K})=\prod_{i=1}^KQ_i(x,dy_{i})$ with $Q_i(x,\cdot)=N(x+\sigma^2/2\nabla\log\pi(x),\sigma^2\I_d)$ being a Langevin proposal, then each MP-MCMC iteration requires the computation of both $\pi$ and $\nabla\log\pi$ at all $y_i$ for $i\in\{1,\dots,K\}$, thus again being (at least) $K$ times more expensive than a single iteration of MH with proposal $Q_i$, unless parallel computing is used.

\section{Upper bounds on the spectral gap of multiproposal MCMC algorithms} \label{sec:teo:results}

In this section we consider MP-MCMC schemes of the form of Algorithm \ref{alg:meta}, providing upper bounds to their improvement as a function of $K$. 
Crucially, our results do not make any assumption on the form of $h$ nor on the dependence structure of $Q$, thus applying simultaneously to all the MP-MCMC variants described in Section \ref{sec:review} and the appendix.

\subsection{A general Peskun ordering result} \label{sec:neg1}

We first show that any $\pi$-reversible MP-MCMC scheme with proposal $Q$ is dominated, in the Peskun sense \citep{peskun1973optimum,tierney1998note}, by $K$ times the kernel of a single proposal MH algorithm with proposal $\tilde{Q}=K^{-1}\sum_{i=1}^KQ_i$, with $Q_i$ defined as in Example \ref{ex:MTM}. 
\begin{theorem}\label{theorem:peskun}
    Let $P^{(K)}$ be a $\pi$-reversible Markov transition kernel as in \eqref{kernel:meta:1}.
    Then
    \begin{equation} \label{mt:marginal:upper:bound:mh}
        P^{(K)}( x,A\backslash \{x\}) \leq K \tilde{P}( x,A\backslash \{x\}).
    \end{equation}
    for $\pi$-a.e. $ x$ and all measurable $A\subseteq \sX$, where $\tilde{P}$ is the (single proposal) MH kernel with proposal distribution $\tilde{Q}= K^{-1}\sum_{i=1}^K Q_i$ and target distribution $\pi$.
\end{theorem}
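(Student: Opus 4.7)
The plan is to reduce \eqref{kernel:meta:1} to an accept-reject kernel with a single proposal $\tilde Q$ and then invoke the classical Peskun/Tierney optimality of Metropolis-Hastings. First, I marginalize over the non-selected proposals: by disintegrating $Q(x,\cdot)$ along each coordinate, I rewrite the off-diagonal part of \eqref{kernel:meta:1} as
\[
P^{(K)}(x,A\setminus\{x\})=\sum_{i=1}^{K}\int_{A}\alpha_i(x,y)\,Q_i(x,dy),
\]
where $\alpha_i(x,y):=E_{\bY_{1:K}\sim Q(x,\cdot)}[h_i(x,\bY_{1:K})\mid Y_i=y]\in[0,1]$ is the marginal probability of accepting $y$ conditional on it being the $i$-th proposed point.

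Next, consider the off-diagonal transition measure $M(x,dy):=\sum_{i=1}^{K}\alpha_i(x,y)\,Q_i(x,dy)$. Because each $\alpha_i\in[0,1]$, one has $M(x,dy)\le\sum_{i=1}^K Q_i(x,dy)=K\,\tilde Q(x,dy)$, so there exists a measurable $\tilde\alpha(x,y)\in[0,1]$ such that $M(x,dy)=K\,\tilde\alpha(x,y)\,\tilde Q(x,dy)$. The $\pi$-reversibility of $P^{(K)}$ is equivalent to symmetry of $\pi(dx)\,M(x,dy)$ on $\{x\ne y\}$, which rewrites as the detailed-balance relation
\[
\pi(dx)\,\tilde Q(x,dy)\,\tilde\alpha(x,y)=\pi(dy)\,\tilde Q(y,dx)\,\tilde\alpha(y,x).
\]
That is, $\tilde\alpha$ is a valid acceptance function for a single-proposal accept-reject kernel with proposal $\tilde Q$ that is $\pi$-reversible.

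I then invoke the Peskun/Tierney optimality \citep{peskun1973optimum,tierney1998note}: among all $[0,1]$-valued acceptance functions satisfying the above detailed balance with proposal $\tilde Q$, the Metropolis-Hastings choice $\alpha_{\mathrm{MH}}$ is pointwise maximal $\pi$-almost everywhere. Hence $\tilde\alpha(x,y)\le\alpha_{\mathrm{MH}}(x,y)$, which yields $M(x,dy)\le K\,\alpha_{\mathrm{MH}}(x,y)\,\tilde Q(x,dy)=K\,\tilde P(x,dy)$ off the diagonal. Integrating over $A\setminus\{x\}$ delivers \eqref{mt:marginal:upper:bound:mh}.

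The main technical obstacle is the measure-theoretic bookkeeping in the first step: defining the $\alpha_i$ as regular conditional expectations and expressing $M(x,\cdot)$ as a measure absolutely continuous with respect to $\tilde Q(x,\cdot)$ on arbitrary (Polish) state spaces, so that the factorization $M=K\tilde\alpha\,\tilde Q$ is well-defined up to $\tilde Q$-null sets. Once this is handled, the argument reduces to a direct application of the classical optimality of MH acceptance to the marginal acceptance $\tilde\alpha$ extracted above, regardless of the specific form of $h$ or of the dependence structure of $Q$.
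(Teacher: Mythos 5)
Your proof is correct and rests on the same two ingredients as the paper's: the bound $h_i\leq 1$, which dominates the off-diagonal part of $P^{(K)}(x,\cdot)$ by $K\tilde{Q}(x,\cdot)$, and the maximality of the Metropolis--Hastings acceptance among $\pi$-reversible accept-reject kernels with proposal $\tilde{Q}$. The difference is one of packaging. You make the reduction explicit by constructing the marginal acceptance probabilities $\alpha_i$ and then a Radon--Nikodym density $\tilde\alpha$ with $P^{(K)}(x,dy)=K\tilde\alpha(x,y)\tilde Q(x,dy)$ off the diagonal, and you invoke Tierney's optimality theorem as a black box. The paper never constructs $\tilde\alpha$: it uses reversibility to write, on $\{x\neq y\}$,
\begin{equation*}
\pi(dx)P^{(K)}(x,dy)=\min\bigl(\pi(dx)P^{(K)}(x,dy),\,\pi(dy)P^{(K)}(y,dx)\bigr)\leq K\min\bigl(\pi(dx)\tilde Q(x,dy),\,\pi(dy)\tilde Q(y,dx)\bigr),
\end{equation*}
and identifies the right-hand side as $K\pi(dx)\tilde P(x,dy)$ --- which is in effect an inline re-derivation of the Tierney optimality you cite. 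The practical consequence is that the ``main technical obstacle'' you flag (regular conditional expectations and the factorization $M=K\tilde\alpha\,\tilde Q$ up to null sets) simply does not arise in the paper's version, since all manipulations are carried out at the level of minima and inequalities of measures on $\sX\times\sX$; your route is sound but carries avoidable measure-theoretic overhead. One small slip: in your first display the inner integral should be over $A\setminus\{x\}$ rather than $A$, though this does not affect the argument.
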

Peskun orderings provide a strong notion of dominance of one kernel over another and, in particular, Theorem \ref{theorem:peskun} can be interpreted as saying that $P^{(K)}$  can speed up convergence relative to $\tilde{P}$ by at most a factor of $K$. 
For example, \eqref{mt:marginal:upper:bound:mh} implies an analogous bound on the corresponding spectral gaps, see e.g.\ \citet[Lemma 32]{andrieu2018uniform} and \citet[Thm.2]{zanella20}.
Recall that the spectral gap of a $\pi$-reversible kernel $P$ is defined as 
\begin{equation*}
Gap(P) = \inf_{f\in L^2(\pi)} \frac{\int_{\sX^2} \lbrace f( y) - f( x) \rbrace^2 \pi(d x)P( x,d y) }{2\mathrm{Var}_{\pi}(f)},
\end{equation*}
where $L^2(\pi)$ denotes the collection of all 
$f \, : \, \sX \to \mathbb{R}$ such that $\mathrm{Var}_{\pi}(f)=\int_{\sX} f( x)^2\pi(d x) - (\int_{\sX}\pi(f) \pi(d x))^2  <\infty$. The spectral gap coincides with the difference between the first and second largest eigenvalue of the operator associated to $P$, and it is commonly used to measure the convergence properties of Markov kernels, with larger gaps related to faster covergence \citep[see e.g.,][Section 12]{levin2017markov}. 
\begin{corollary}\label{coroll:peskun_gap}
Under the same assumptions of Theorem \ref{theorem:peskun}, it holds 
$Gap(P^{(K)}) \leq K Gap(\tilde{P})$.
\end{corollary}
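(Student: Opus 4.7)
The plan is to deduce the gap comparison directly from the variational (Dirichlet form) characterisation of the spectral gap stated in the excerpt, using the pointwise bound of Theorem \ref{theorem:peskun} as the single key input. There is essentially no obstacle: the factor $K$ passes through the infimum unchanged, so the corollary follows from a short one-line argument once the set-up is arranged correctly.

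Concretely, I would first observe that the integrand $\{f(y)-f(x)\}^2$ in the numerator of $Gap(P)$ vanishes on the diagonal $\{y=x\}$, so the point mass $P^{(K)}(x,\{x\})$ contributes nothing and the numerator equals
\begin{equation*}
\int_{\sX}\pi(dx)\int_{\sX\setminus\{x\}}\{f(y)-f(x)\}^2 P^{(K)}(x,dy).
\end{equation*}
Next, for each fixed $x$, Theorem \ref{theorem:peskun} gives $P^{(K)}(x,\cdot\setminus\{x\})\le K\tilde P(x,\cdot\setminus\{x\})$ as measures on $\sX\setminus\{x\}$ for $\pi$-a.e.\ $x$. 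Integrating the nonnegative function $y\mapsto\{f(y)-f(x)\}^2$ against these two measures and using monotonicity of the integral yields, for every $f\in L^2(\pi)$,
\begin{equation*}
\int_{\sX}\pi(dx)\int_{\sX}\{f(y)-f(x)\}^2 P^{(K)}(x,dy)
\le K\int_{\sX}\pi(dx)\int_{\sX}\{f(y)-f(x)\}^2\tilde P(x,dy).
\end{equation*}

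Finally, dividing by $2\mathrm{Var}_\pi(f)$ (for $f$ non-constant, so the denominator is positive) and taking the infimum over $f\in L^2(\pi)$ on both sides gives $Gap(P^{(K)})\le K\, Gap(\tilde P)$, as required. The only subtlety worth mentioning is that the $\pi$-a.e.\ nature of the bound in Theorem \ref{theorem:peskun} is harmless because the outer integral is taken against $\pi(dx)$; no exchange of limits or compactness argument is needed.
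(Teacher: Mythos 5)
Your proof is correct and is essentially the paper's argument: the paper simply cites the classical Peskun-ordering implication (Lemma 32 of Andrieu et al., 2018), which is exactly the Dirichlet-form comparison you write out explicitly. Spelling out that the diagonal contributes nothing to the numerator and that the sub-probability domination of Theorem \ref{theorem:peskun} passes through the Rayleigh quotient is precisely the content of the cited lemma, so there is no substantive difference.
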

Since in most situations the computational cost per iteration of $P^{(K)}$ is at least $K$ times higher than the one of $\tilde{P}$
(for the reasons discussed in Section \ref{sec:par_comp}), Theorem \ref{theorem:peskun} implies that running a serial implementations of $P^{(K)}$ for $T$ iterations is less efficient than running $\tilde{P}$ for $K\times T$ iterations.
This confirms the intuition that, in a serial computing context, it is optimal to take $K=1$.

However, if one considers a parallel computing context, where $K$ represents the number of parallel workers, speeding up convergence by a factor of $K$ is not restrictive. On the contrary, one could argue that (ignoring additional over-heads due to communication costs) it represents an optimal usage of parallel resources.
In this sense, Theorem \ref{theorem:peskun} has strong (negative) implications for serial implementations of MP-MCMC, but not many direct implications for parallel ones.
In the next sections, however, we refine our analysis showing that in common situations the actual gain is much more limited.

\begin{remark}[Related results]
\citet{yang2023convergence} provide a full spectral analysis of $P^{(K)}$ in the specific case of the MTM with independent proposals, where $Q(x,\by_{1:K})$ does not depend on $x$, proving an upper bound on $Gap(P^{(K)})$ analogous to that given in Corollary \ref{coroll:peskun_gap}. 
In the context of MTM algorithms with locally-balanced weight functions applied to Bayesian model selection problems, \citet{chang2022rapidly} provide a lower bound on $Gap(P^{(K)})$ that grows linearly with $K$ for finite sample spaces $\sX$ whose dimensionality grows with $K$ at appropriate rates (see Theorem 1 therein). This implies that the upper bound in Corollary \ref{coroll:peskun_gap} can be tight in some cases.
\end{remark}

\subsection{Spectral gap behaviour for continuous target distributions} \label{sec:neg2}
We now consider the Euclidean case, $\sX = \mathbb{R}^d$, where more explicit bounds on $Gap(P^{(K)})$ can be derived. 
Our first result is the following.
\begin{theorem}\label{theo:1}
Let $P^{(K)}$ and 
$\tilde{P}$ as in Theorem \ref{theorem:peskun}, with $\sX \subseteq \mathbb{R}^d$.
Then 
\begin{equation} \label{theo1:result1}
\begin{aligned}
& Gap(P^{(K)})  \leq  \min  \Biggl(2 K \essinf_{ x \in \sX}  \tilde{P}( x,\sX \backslash \{ x\}), \, \inf_{\nu \in \mathbb{R}^d,\,\|\nu\|=1} 
\frac{ E(\max_{i = 1,\dots,K} 
(\nu^{\top}(Y_i -X))^2 
)}{2 \mathrm{Var}(\nu^{\top} X)} \Biggr),
\end{aligned}
\end{equation}
where 
$X \sim \pi$, $\bY_{1:K}|X\sim Q(X,\cdot)$, and 
$\essinf$ refers to the essential infimum with respect to $\pi$.
\end{theorem}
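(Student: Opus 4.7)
The plan is to establish the two bounds in \eqref{theo1:result1} separately via carefully chosen test functions in the Rayleigh quotient characterisation of the spectral gap, and then to take the minimum.

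For the second bound I would use the linear test function $f_\nu( x) = \nu^\top  x$ for a unit vector $\nu \in \mathbb{R}^d$. Its variance is $\mathrm{Var}_\pi(f_\nu)=\mathrm{Var}(\nu^\top X)$, which matches the denominator appearing in the statement. The key observation for the numerator is that by Algorithm \ref{alg:meta} the next state $X_{t+1}$ lies in $\{X_t,Y_1,\dots,Y_K\}$, so $(f_\nu(X_{t+1}) - f_\nu(X_t))^2$ is either $0$ or $(\nu^\top(Y_i - X_t))^2$ for some $i$, and hence pointwise bounded by $\max_{i=1,\dots,K}(\nu^\top(Y_i - X_t))^2$, regardless of the selection probabilities $h$. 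Taking expectations with $X_t \sim \pi$ and $\bY_{1:K}\mid X_t \sim Q(X_t,\cdot)$ yields $\int (f_\nu(y)-f_\nu( x))^2\,\pi(d x)P^{(K)}( x,dy) \leq E[\max_{i=1,\dots,K}(\nu^\top(Y_i-X))^2]$, so the Rayleigh quotient associated with $f_\nu$ is bounded by the expression on the right-hand side of the second bound, and minimising over unit $\nu$ completes the argument.

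For the first bound I would combine Theorem \ref{theorem:peskun} with the auxiliary inequality $Gap(P) \leq 2\,\essinf_{ x \in \sX} P( x,\sX \setminus \{ x\})$, valid for any $\pi$-reversible kernel $P$. To prove this auxiliary inequality, set $\alpha = \essinf_{ x} P( x,\sX \setminus \{ x\})$ and, for fixed $\epsilon>0$, let $B_\epsilon = \{ x : P( x,\sX \setminus \{ x\}) \leq \alpha + \epsilon\}$, which has positive $\pi$-measure by definition of $\essinf$. Pick a measurable subset $B \subseteq B_\epsilon$ with $0 < \pi(B) \leq 1/2$ and test with $f = \mathbf{1}_B$: since $B^c \subseteq \sX \setminus \{ x\}$ whenever $ x \in B$, reversibility gives $\int (f(y)-f( x))^2\,\pi(d x)P( x,dy) = 2\int_B P( x, B^c)\,\pi(d x) \leq 2(\alpha + \epsilon)\,\pi(B)$, while $\mathrm{Var}_\pi(f) = \pi(B)(1-\pi(B)) \geq \pi(B)/2$. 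Substituting into the Rayleigh quotient and letting $\epsilon\downarrow 0$ yields the claimed inequality. Applying it to $P^{(K)}$ and invoking the pointwise Peskun bound $P^{(K)}( x,\sX \setminus \{ x\}) \leq K \tilde{P}( x,\sX \setminus \{ x\})$ of Theorem \ref{theorem:peskun} delivers the first upper bound.

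The second bound reduces to a one-line computation once the maximum-increment observation is made. The main technical care lies in the auxiliary inequality $Gap(P) \leq 2\,\essinf_ x P( x,\sX \setminus \{ x\})$: beyond standard measure-theoretic bookkeeping (e.g.\ ensuring measurability of $B_\epsilon$), the only subtle point is the existence of a subset $B \subseteq B_\epsilon$ with $\pi(B) \in (0,1/2]$, which could fail if $\pi$ charges a dominant atom in the slow-mixing region; such degenerate situations can be treated separately and do not arise in the Euclidean, continuous-target setting in which Theorem \ref{theo:1} is stated.
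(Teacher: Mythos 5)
Your proof is correct and follows essentially the same route as the paper: the second bound is obtained exactly as in the paper, by testing the Rayleigh quotient with linear functions $f_\nu(x)=\nu^\top x$ and using the pointwise bound $(\nu^\top(X_{t+1}-X_t))^2\leq\max_{i}(\nu^\top(Y_i-X_t))^2$, which holds because $X_{t+1}\in\{X_t,Y_1,\dots,Y_K\}$ regardless of $h$. For the first bound the paper invokes the conductance inequality $Gap(P)\leq 2\Phi(P)$ and then bounds $\Phi(P)$ by $\esssup_{x\in A}P(x,\sX\setminus\{x\})$ over sets $A$ with $0<\pi(A)\leq 1/2$, whereas you prove the same auxiliary inequality $Gap(P)\leq 2\essinf_x P(x,\sX\setminus\{x\})$ directly by testing with indicators $\mathbf{1}_B$ for $B$ inside the near-minimizing set; this is a more self-contained version of the same argument, since the conductance bound is itself proved with indicator test functions. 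The caveat you flag --- the need for a subset $B\subseteq B_\epsilon$ with $\pi(B)\in(0,1/2]$ --- is a genuine subtlety rather than pedantry: the paper's proof makes the analogous (and equally unjustified in full generality) claim that the near-minimizing region contains a set of mass at most $1/2$, and the auxiliary inequality can actually fail for reversible chains whose slow region is a single atom of mass exceeding $1/2$ (e.g.\ a two-state chain with very unbalanced stationary masses), so restricting to atomless $\pi$, as you do, is the right fix and is implicit in the paper's intended setting.
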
 
The upper bound in \eqref{theo1:result1} is analogous to the ones available for standard MH schemes \citep[Thm.46]{andrieu2022explicit}. It has an intuitive structure which relates to the classical trade-off when choosing an appropriate step-size in accept-reject MCMC schemes: the first term gets small if the proposals are rejected too often (i.e.\ step-size too large) while the second gets small if the proposed jumps are too little (i.e.\ step-size too small). 
Crucially, the only aspects of \eqref{theo1:result1} that are related to the multiproposal context are (i) the appearance of the factor $K$ in the first term and (ii) the maximum across $K$ variables in the second one.
In common situations, the latter maximum scales logarithmically in $K$, resulting in upper bounds of the following form.
\begin{corollary}\label{coroll:mgf}
Under the same setting of Theorem \ref{theo:1}, assuming that $E(\exp(s (\nu^{\top}(Y_i -X))^2 )\leq M^{\nu}(s)<\infty$ for some $s \in \mathbb{R}^{+}$ and all $i \in \{1,\dots,K\}$, we have
\begin{equation} \label{theo1:result2}
\begin{aligned}
Gap(P^{(K)}) \leq \min  \Biggl(2 K \essinf_{ x \in \sX}  \tilde{P}( x,\sX \backslash \{ x\}),
\, 
\inf_{\nu \in \mathbb{R}^d,\,\|\nu\|=1}
\frac{1}{2s}\frac{ \log(K) + \log(M^{\nu}(s))}{ \mathrm{Var}(\nu^{\top} X)} \Biggr)\,.
\end{aligned}
\end{equation}
\end{corollary}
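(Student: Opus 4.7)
The plan is to derive Corollary \ref{coroll:mgf} directly from Theorem \ref{theo:1}. The first term in the minimum of \eqref{theo1:result1} already matches the first term in \eqref{theo1:result2}, so the only real work is to bound the expectation of the maximum appearing in the second term by a quantity that grows logarithmically in $K$, under the given moment generating function assumption.

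To this end, I fix a unit vector $\nu\in\R^d$ and set $Z_i = (\nu^\top (Y_i - X))^2$ for $i=1,\dots,K$. The hypothesis states that $E(\exp(s Z_i)) \leq M^{\nu}(s)$ for each $i$. I will invoke the standard maximal-inequality argument: by Jensen's inequality applied to the concave function $\log$ and monotonicity of the exponential,
\begin{equation*}
s\, E\Bigl(\max_{i=1,\dots,K} Z_i\Bigr)
= E\Bigl(\log \max_{i=1,\dots,K} e^{sZ_i}\Bigr)
\leq \log E\Bigl(\max_{i=1,\dots,K} e^{sZ_i}\Bigr)
\leq \log \sum_{i=1}^K E(e^{s Z_i})
\leq \log\bigl(K M^{\nu}(s)\bigr).
\end{equation*}
Dividing by $s$ yields
\begin{equation*}
E\Bigl(\max_{i=1,\dots,K} (\nu^\top(Y_i-X))^2\Bigr) \leq \frac{\log(K) + \log(M^{\nu}(s))}{s}.
\end{equation*}

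Substituting this bound into the second argument of the minimum in \eqref{theo1:result1} and then taking the infimum over unit vectors $\nu\in\R^d$ yields exactly the second term of \eqref{theo1:result2}. Combining with the unchanged first term of \eqref{theo1:result1} gives the claim.

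There is no substantive obstacle here: the argument is a one-line application of a well-known maximal inequality, and the rest is a direct substitution into the result of Theorem \ref{theo:1}. The only point worth being careful about is that the bound $E(\exp(s Z_i))\leq M^{\nu}(s)$ is assumed to hold uniformly in $i$ with the same $s$, which is precisely what allows the union-bound step in the chain above to produce a factor of $K$ (and hence a $\log K$ after taking logarithms).
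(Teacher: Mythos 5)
Your proof is correct and follows essentially the same route as the paper: the standard maximal inequality via the union bound and the moment generating function assumption, then direct substitution into the second term of \eqref{theo1:result1}. The only cosmetic difference is that you apply Jensen's inequality to the concave logarithm whereas the paper applies it to the convex exponential, which is the same inequality written in the other direction.
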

Keeping all other elements (such as $\pi$ and $\tilde{Q}$) fixed, the upper bound in \eqref{theo1:result2} scales as $\log(K)$ for $K\to\infty$.
This, however, does not allow us to deduce that, as $K$ grows, performances of MP-MCMC increase logarithmically with $K$, because in practice the optimal choice of $\tilde{Q}$ would change as $K$ grows (e.g.\ a user of an MP-MCMC scheme would increase the proposal step-size as
$K$ grows).
Instead, to understand the actual behaviour of $Gap(P^{(K)})$ as a function of $K$ (and potentially $d$), one needs to specify a class of proposal distributions for $Q$ and optimise the bound in \eqref{theo1:result2} over such class for every fixed $K$. 
In the next section, we do so for the case of random walk proposals, which are arguably the most commonly used ones in the MP-MCMC context.

\subsection{Analysis for random walk multiproposal schemes under log-concavity} \label{sec:neg3}
We make the following assumptions on $\pi$ and $Q$.

\begin{assumption} \label{cond:1}
$\sX \subseteq \mathbb{R}^d$ and $\pi$ admits density $\pi( x) = \exp(-U( x))$  with respect to Lebesgue, with $U( x)$ being $m$-convex, $L$-smooth and twice continuously differentiable. Equivalently, the Hessian of $U$, which we denote as $\mathbf{U}''(x)$, is such that  $\mathbf{U}''( x) - m\I_d$  and $L\I_d -\mathbf{U}''( x)$ are positive semi-definite matrices for every $ x\in \sX$.
\end{assumption}
\begin{assumption} \label{cond:2}
$Q_i( x,\cdot) = N( x,\sigma^2\I_d)$ for $x\in\sX$ and $i=1,\dots,K$, with $\sigma >0$.
\end{assumption}
Assumption \ref{cond:1} is a standard assumption in theoretical studies of MCMC methods, see e.g.\ \citet{chewi2024log} and references therein. 
Assumption \ref{cond:2} includes standard multiproposal implementations that rely on independent and identically distributed Gaussian proposals, as well more advanced MP-MCMC schemes with correlated proposals \citep{craiu2007acceleration}. Assumption \ref{cond:2} could be relaxed by allowing $\sigma$ to depend on $i$, at the cost of more involved statements: since the conclusions would be similar, we stick to Assumption \ref{cond:2} for simplicity. 

\begin{theorem}\label{thm:GRW}
Under Assumptions \ref{cond:1} and \ref{cond:2}, we have
\begin{equation} \label{thm:GRW:result}
\begin{aligned}
&Gap(P^{(K)})  \leq 
2\min  \Biggl( \frac{K}{(1 + m \sigma^2)^{d/2}},  
\sigma^2 L \left(\frac{1}{2}+\log(K)\right)\Biggr).
 \end{aligned}
\end{equation}
It follows that, for $c=4\exp(1.04)$ and all $d,K>2$, we have
\begin{align}\label{thm:GRW:result2}
\sup_{\sigma\in\R^+} Gap(P^{(K)})
&\leq
c 
\frac{L}{m}
\frac{(\log(K)+\log(d))^2}{d}\,.
\end{align} 
\end{theorem}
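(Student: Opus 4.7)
For Part 1, the plan is to establish the two terms in the minimum \eqref{thm:GRW:result} separately, by invoking Theorem \ref{theo:1} and Corollary \ref{coroll:mgf} respectively. Under Assumption \ref{cond:2}, $\tilde Q=K^{-1}\sum_{i=1}^K Q_i$ equals $N(x,\sigma^2\I_d)$, so $\tilde P$ is a Gaussian random walk MH kernel. For the first term $2K\,\essinf_x \tilde P(x,\sX\setminus\{x\})$, I would evaluate the acceptance probability at the mode $x^{\star}$ of $\pi$: by $m$-strong convexity of $U$, $\pi(y)/\pi(x^{\star})\le \exp(-m\|y-x^{\star}\|^2/2)$, and direct integration of the resulting product of Gaussian kernels yields $\tilde P(x^{\star},\sX\setminus\{x^{\star}\})\le (1+m\sigma^2)^{-d/2}$; continuity of $x\mapsto\tilde P(x,\sX\setminus\{x\})$ under Assumption \ref{cond:1} upgrades this to a bound on the essential infimum.

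For the second term, I would apply Corollary \ref{coroll:mgf}. Since $Y_i-X\sim N(0,\sigma^2\I_d)$, $(\nu^{\top}(Y_i-X))^2$ equals $\sigma^2$ times a $\chi^2_1$ variable with MGF $(1-2s\sigma^2)^{-1/2}$ for $s<1/(2\sigma^2)$; choosing $s=1/(4\sigma^2)$ gives $\log M^{\nu}(s)=(\log 2)/2<1/2$ and $1/(2s)=2\sigma^2$, so Corollary \ref{coroll:mgf} yields $Gap(P^{(K)})\le 2\sigma^2(\log K+1/2)/\mathrm{Var}(\nu^{\top}X)$. To conclude I would derive a reverse-Poincar\'e inequality $\mathrm{Var}(\nu^{\top}X)\ge 1/L$: integration-by-parts on $\pi=e^{-U}$ gives $E_\pi[X\nabla U(X)^{\top}]=\I_d$ and $E_\pi[\nabla U(X)]=0$, so $\mathrm{Cov}_\pi(\nu^{\top}X,\nu^{\top}\nabla U(X))=1$; a second IBP gives $E_\pi[\nabla U\nabla U^{\top}]=E_\pi[\mathbf{U}''(X)]\preceq L\I_d$, whence Cauchy-Schwarz on the covariance gives $1\le \sqrt{\mathrm{Var}(\nu^{\top}X)\cdot L}$.

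For Part 2, the right-hand side of \eqref{thm:GRW:result} is the minimum of a decreasing and an increasing function of $\sigma$, so its supremum is attained where the two terms coincide. I would parameterize $u=m\sigma^2$, set $r=\log K+\log d$, and evaluate both terms at the near-optimal choice $u_0=2r/d$: the second term becomes $(4L/(md))\,r(1/2+\log K)\le (4L/(md))\,r^2$, since $\log K+1/2\le r$ for $d>2$; using the elementary inequality $\log(1+u)\ge u-u^2/2$, one gets $(1+u_0)^{d/2}\ge Kd\,\exp(-r^2/d)$, so the first term is at most $(2/d)\exp(r^2/d)$. Then I would split on the threshold $r^2/d\le 1.04$: in this case the first-term bound is $\le 2e^{1.04}/d\le 4\exp(1.04)L r^2/(md)$, using $L\ge m$ and $r^2\ge (\log 4)^2>1/2$; otherwise $4\exp(1.04)L r^2/(md)>4\exp(1.04)\cdot 1.04>2$, so the conclusion follows from the trivial bound $Gap(P^{(K)})\le 2$ (immediate from $(f(y)-f(x))^2\le 2f(y)^2+2f(x)^2$ and $\pi$-stationarity). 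The main obstacle is the reverse-Poincar\'e step in Part 1, which goes opposite to the standard Poincar\'e direction, together with the careful case analysis in Part 2 needed to match the stated constant $c=4\exp(1.04)$.
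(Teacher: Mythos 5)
Your proposal is correct and follows the same overall strategy as the paper's proof: both branches of \eqref{thm:GRW:result} are obtained from Theorem \ref{theo:1}/Corollary \ref{coroll:mgf}, with the rejection-rate bound $(1+m\sigma^2)^{-d/2}$ coming from $m$-strong convexity, the lower bound $\mathrm{Var}(\nu^{\top}X)\ge 1/L$ from $L$-smoothness, and the $\chi^2_1$ moment generating function supplying the $\log K$ term; \eqref{thm:GRW:result2} then follows by balancing the decreasing and increasing branches in $\sigma$. The differences are in execution rather than substance. In Part 1 you derive the two key inequalities from scratch --- evaluating the acceptance rate at the mode and integrating the Gaussian product, and a Cram\'er--Rao-type integration-by-parts argument for the variance bound --- where the paper simply cites Andrieu et al.\ (Proposition 45) and Saumard--Wellner (Eq.\ 10.25); both derivations are standard and valid (the mode argument tacitly needs $x^{\star}$ interior to $\sX$ so that $\nabla U(x^{\star})=0$, an issue the paper's citation shares). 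Your MGF parameter $s=1/(4\sigma^2)$ gives $\log M^{\nu}(s)=(\log 2)/2$ versus the paper's $s=(1-e^{-1})/(2\sigma^2)$ giving $\log M^{\nu}(s)=1/2$; both land on $2\sigma^2 L(1/2+\log K)$. In Part 2 the paper picks the balancing point $r_0$ so that the first branch equals exactly $2/d$, invokes Lemma \ref{lemma:two_functions}, and splits on $\log K\ge d$; you instead evaluate both branches at $u_0=2r/d$ with $r=\log K+\log d$, use $\log(1+u)\ge u-u^2/2$ in place of $e^x-1\ge x$ and $e^x-1\le e^{c}x$, and split on $r^2/d\le 1.04$, disposing of the complementary case via the trivial bound $Gap(P^{(K)})\le 2$ (which does hold, after centring $f$). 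Both case analyses are correct and recover the same constant $c=4\exp(1.04)$; yours is somewhat more self-contained at the cost of the extra elementary verifications.
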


Theorem \ref{thm:GRW} provides finite-sample upper bounds for the spectral gap of every MP-MCMC under Assumptions \ref{cond:1}-\ref{cond:2}. The bound in \eqref{thm:GRW:result} depends on the proposal step-size $\sigma$, while the one in \eqref{thm:GRW:result2} considers optimally-tuned MP-MCMC schemes. As mentioned above, maximizing with respect to $\sigma$ is important, since in general the optimal value of $\sigma$ depends on $K$ in a non-trivial way (intuitively, larger $K$ allows the user to take larger steps).
The bound in \eqref{thm:GRW:result2} implies that, for $d,K \to \infty$ and $K$ growing polynomially with $d$, the spectral gap of optimally tuned MP-MCMC schemes is at most  of order $d^{-1}\log(K)^2$.
Since, under Assumptions \ref{cond:1} and \ref{cond:2}, \citet[Thm.1]{andrieu2022explicit} provides lower bounds of order $d^{-1}$ for MH with $K=1$, \eqref{thm:GRW:result2} implies that, in the context of Assumptions \ref{cond:1} and \ref{cond:2}, multiproposal schemes can provide at most a gain of order $\log(K)^2$ relative to the standard MH.

\section{A simulation study on high-dimensional logistic regression} \label{sec:sim}
We illustrate the above theoretical results through a simulation study, examining how $K$ affects performances.
We consider a $50$-dimensional Bayesian logistic regression model, with likelihood $z_i|\btheta \sim \hbox{Bernoulli}(1/(1+\exp(- \textbf{b}_i^\top \btheta))$ independently for $i\in\{ 1,\dots,n\}$ with $\textbf{b}_i\in \mathbb{R}^{d}$, and $\btheta \in \mathbb{R}^{d}$ having prior $\btheta \sim N(\mathbf{0}, (25/d)\I_d)$.
The target distribution is the posterior distribution $\pi(\btheta \mid \bz )$, with $n=d=50$ and observed data $\bz = (z_1,\dots,z_{n})$ generated by sampling from the likelihood with covariates and parameters randomly generated as $ \textbf{b}_i \sim N(\mathbf{0}, \I_{d})$ and $\btheta_0 \sim N(\mathbf{0}, (1/4) \I_{d})$.

We consider five different commonly used MP-MCMC schemes with random walk increments: MTM as in Example \ref{ex:MTM} with both independent \citep{liu2000multiple} and extremely antithetic \citep{craiu2007acceleration} zero-mean Gaussian increments, both with globally and locally balanced \citep{gagnon2023improving} weight functions, i.e.\ $w_i(\btheta,\btheta')=\pi(\btheta'|\bz)/\pi(\btheta|\bz)$ and $w_i(\btheta,\btheta')=(\pi(\btheta'|\bz)/\pi(\btheta|\bz))^{1/2}$; and finally \citet{tjelmeland2004using} star-shaped proposal described in Example \ref{ex:GMH}.
Note that the target $\pi(\btheta|\bz)$ satisfies Assumption \ref{cond:1} and, for each algorithm, the marginal distributions of $Q$ are Gaussian and identically distributed. Thus, the simulation setting satisfies the assumptions of Theorem \ref{thm:GRW}.

\begin{figure}
    \centering
    \includegraphics[width=14cm,height=6.5cm]{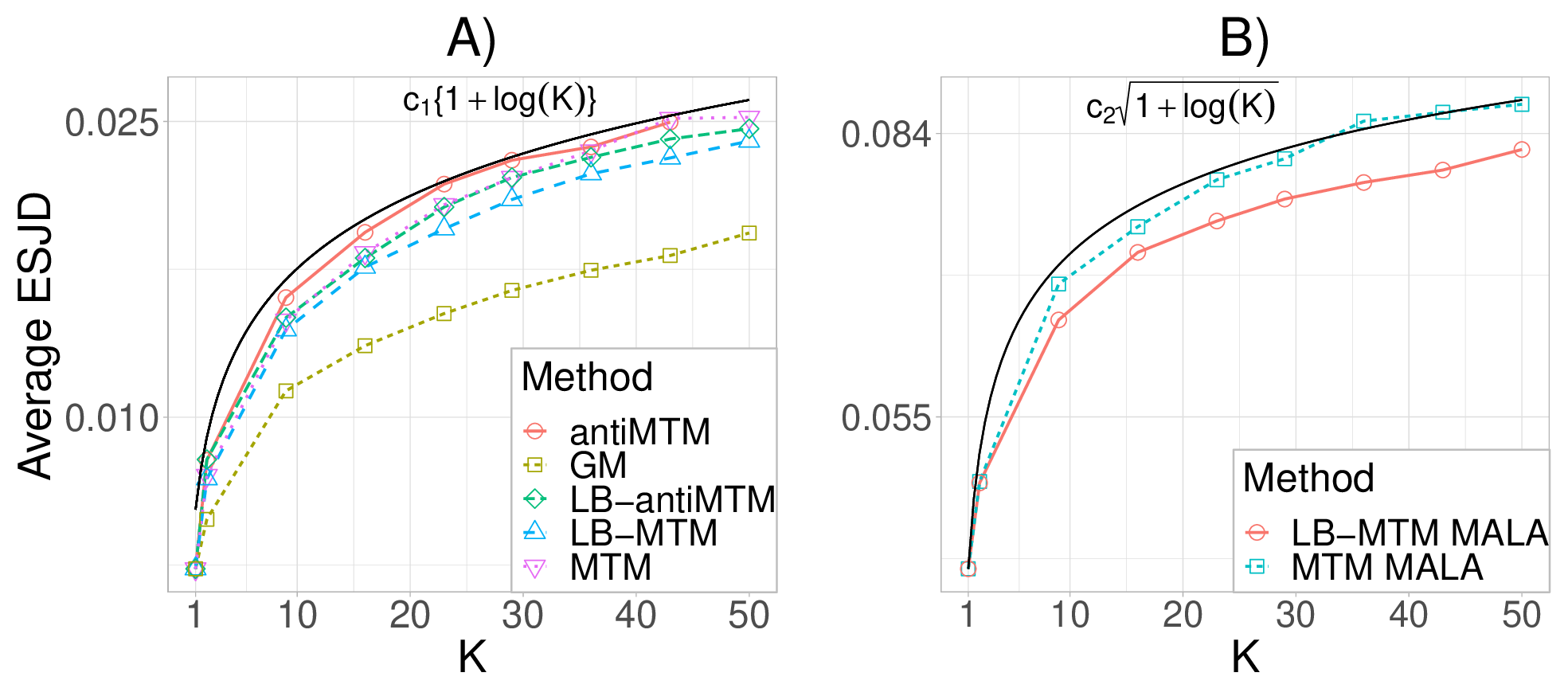}
    \caption{Estimated ESJD as a function of $K$, for MP-MCMC schemes with random walk (A) and Langevin (B) proposal, as described in Section \ref{sec:sim}.
    Black solid lines indicate the functions  $c_1\lbrace 1 + \log(K)\rbrace$ with  $c_1 = 2.3 E_{\textsc{rw}} $ for (A) and $c_2 \sqrt{1+\log(K)}$ with $ c_2 = E_{\textsc{mala}} $ for (B), where $E_{\textsc{rw}}$ and $E_{\textsc{mala}}$ denote the ESJD of MH with $K=1$ and, respectively, random walk and Langevin proposal.}
    \label{fig:log1}
\end{figure}

Figure \ref{fig:log1}-A) reports the estimated expected squared jump distance (ESJD), averaged over the $d$ components of $\btheta$.
For each scheme, the proposal step-size is optimised to maximise the estimated ESJD.
Taking the latter as an empirical measure of performance, all MTM schemes perform similarly. 
Instead,  GMH exhibits a slightly lower ESJD, although it should be noted that GMH requires roughly half target evaluations per iteration relative to MTM (see e.g.\ Section \ref{sec:par_comp}).
More importantly, we see that, regardless of the specific variant, all MP-MCMC schemes display a logarithmic growth in performances as $K$ increases, in accordance with Theorem \ref{thm:GRW}. The growth is actually dominated by the function $c_1 \lbrace 1 + \log(K)\rbrace$, suggesting that the upper bound of order $\lbrace 1 + \log(K)\rbrace^2$ is potentially conservative in this case. 

A similar behaviour is observed for MP-MCMC schemes with gradient-based proposals. In particular, Figure \ref{fig:log1}-B) displays the same quantities as Figure \ref{fig:log1}-A) for two MTM algorithms with proposals $Q(\btheta,d(\btheta'_1,\dots,\btheta'_K))=\prod_{i=1}^KQ_i(\btheta,d\btheta'_{i})$ and $Q_i(\btheta,\cdot)=N(\btheta+\sigma^2/2\nabla\log\pi(\btheta|\bz),\sigma^2\I_d)$.
In this setting, while the use gradient information leads to higher ESJD values, the growth with $K$ is even weaker, roughly following the function $c_2\sqrt{1+\log(K)}$.
Indeed, we expect theoretical results similar to Theorem \ref{thm:GRW} to hold also for gradient-based MP-MCMC schemes. 
However, while Theorem \ref{theo:1} applies to any proposal, the challenge to extend Theorem \ref{thm:GRW} to gradient-based schemes would be to control explicitly the term $\tilde{P}( x,\sX \backslash \{ x\})$ as a function of the proposal stepsize \citep{chen2023simple} and, more crucially, the need to have explicit and tight lower bounds on the spectral gap for $K=1$
in order to appropriately interpret the results.


\bibliographystyle{biometrika}
\bibliography{paper-ref}

\appendix
\setcounter{equation}{0}
\renewcommand{\theequation}{\thesection.\arabic{equation}}
\setcounter{theorem}{0}
\renewcommand{\thetheorem}{\thesection.\arabic{theorem}}

\section{Concise overview of multiproposal MCMC methods}\label{sec:review_MP_MCMC}
This section provides a more detailed discussion and review of MP-MCMC methods, in particular Multiple-try Metropolis (MTM) and Generalised Metropolis-Hastings (GMH) ones mentioned in Section \ref{sec:review} of the paper.
While not technically novel, we also provide self-contained propositions (i.e.\ Propositions \ref{prop:kernel:mtm} and \ref{prop:kernel:gm})
showing the validity of those schemes and highlighting how they fit into the general framework of Algorithm \ref{alg:meta}. 

\subsection{Multiple-try Metropolis methods}\label{sec:MTM}
Although the term multiple-try Metropolis  encompasses various different algorithms, on a macroscopic level each method can be characterised by the specification of some basic components: a transition kernel $Q$ from $\sX$ to $\sX^K$ and $K$ non-negative weighting functions $w_i:\sX^2\to\R^+$. 
For example, when $\sX=\mathbb{R}^d$ and $\pi$ denotes the target density with respect to Lebesgue, a standard option is to take independent and identically distributed Gaussian increments, i.e.\ $Q( x,d\by_{1:K}) = \prod_i^K Q_i( x,d y_i)$ with $Q_i( x,\cdot) = N( x, \sigma^2 \I_d)$, $\sigma>0$, and weight function equal to $w_i( x, y) = \pi( y)$. 

Using these ingredients, multiple-try Metropolis algorithms proceed as described in Algorithm \ref{alg:MTM}, which can be obtained by combining Algorithm \ref{alg:meta} and Example \ref{ex:MTM} in the paper. 
Here, for every $i\in\{1,\dots,K\}$, $ x, y_i\in\sX$ and $\by_{-i}=( y_j)_{j\neq i}\in\sX^{K-1}$,  we define the normalised weights as
\begin{equation}\label{eq:norm_weights}
\bar{w}_i( x,  y_i;\by_{-i})=
\frac{w_i( x,  y_i)}{w_i( x,  y_i)+\sum_{j\neq i} w_j( x, y_j)}\,.
\end{equation}
\begin{algorithm}
\caption{Multiple-try Metropolis algorithm }\label{alg:MTM}
\KwIn{$ X_0 \in \sX$, a Markov transition kernel $Q$ from $\sX$ to $\sX^{K}$, a set of weight functions $w_i:\sX^2\to\R^+$ for $i=1,\dots,K$}
\For{$t = 0,1,\dots$ 
}{
Given $ X_t= x$, sample $\by_{1:K}\sim Q( x,\cdot)$
\\
Sample $i\in\{1,\dots,K\}$ with probabilities $\{\bar{w}_i( x,  y_i;\by_{-i})\}_{i=1}^K$\\
Sample $\by'_{-i} \sim \bar{Q}^{ x}_i( y_i, \cdot)$\\
Set $ X_{t+1} =  y_i$ with probability
$$\alpha_i( x,  y_i,\by_{-i},\by'_{-i})
=
\min \left( \, 1\, , \,  \frac{\bar{w}_i( y_i,  x;\by'_{-i})\pi(d y_i)Q_i( y_i,d x) 
}{
\bar{w}_i( x,  y_i;\by_{-i})\pi(d x)Q_i( x,d y_i) 
} \right), $$
and otherwise $ X_{t+1} =  x$
}
\end{algorithm} 

Algorithm \ref{alg:MTM} is a special case of Algorithm \ref{alg:meta} (or equivalently its transition kernel in a special case of \eqref{kernel:meta:1}) that is guaranteed to produce a $\pi$-reversible Markov chain, as shown in the following proposition.
\begin{proposition} \label{prop:kernel:mtm}
The Markov chain $\{ X_t\}$ induced by Algorithm \ref{alg:MTM} is $\pi$-reversible and its transition kernel coincides with \eqref{kernel:meta:1} by defining the function $h:\sX^{1+K}\to S_K$ as
\begin{equation}\label{eq:mtm_h}
h_i( x, \by_{1:K}) =   
\bar{w}_i( x,  y_i;\by_{-i})
\int_{\sX^{K-1} }  \alpha_i( x,  y_i,\by_{-i},\by'_{-i})\bar{Q}^{ x}_i( y_i, d\by'_{-i}),  
\end{equation}
for $i = 1,\dots,K$, $ x\in\sX$ and $\by_{1:K}\in\sX^K$.
\end{proposition}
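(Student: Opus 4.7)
The proposition has two parts: (i) identifying the transition kernel of Algorithm \ref{alg:MTM} with \eqref{kernel:meta:1} for the specific $h$ in \eqref{eq:mtm_h}, and (ii) verifying $\pi$-reversibility. My plan is to handle (i) by direct unrolling of the algorithmic steps, and (ii) by term-by-term detailed balance on an extended state space.

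For (i), starting from the current state $x$, the probability that Algorithm \ref{alg:MTM} moves to a set $A$ with $x\notin A$ is obtained by summing over the selected index $i$ and integrating against the joint law of the sampled variables $(\by_{1:K},\by'_{-i})$, weighted by the index-selection probability $\bar{w}_i(x,y_i;\by_{-i})$ and the acceptance probability $\alpha_i(x,y_i,\by_{-i},\by'_{-i})$. This gives
\begin{equation*}
P^{(K)}(x,A\backslash\{x\}) = \sum_{i=1}^K \int_{\sX^K}\!\int_{\sX^{K-1}} \mathbf{1}_A(y_i)\, \bar{w}_i(x,y_i;\by_{-i})\,\alpha_i(x,y_i,\by_{-i},\by'_{-i})\, \bar{Q}^x_i(y_i,d\by'_{-i})\,Q(x,d\by_{1:K}).
\end{equation*}
Applying Fubini to push the integral over $\by'_{-i}$ inside and comparing with \eqref{eq:mtm_h} yields the representation \eqref{kernel:meta:1} with the stated $h$. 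I also need to verify that $h$ takes values in $S_K$, which follows because $\sum_i \bar{w}_i(x,y_i;\by_{-i})=1$ and $\alpha_i\in[0,1]$.

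For (ii), it suffices to check detailed balance on the off-diagonal, namely $\pi(dx) P^{(K)}(x,dy)=\pi(dy)P^{(K)}(y,dx)$ on $\sX\times\sX$ minus the diagonal. I will argue this separately for each summand $i$. Using the disintegration $Q(x,d\by_{1:K}) = Q_i(x,dy_i)\,\bar{Q}^{y_i}_i(x,d\by_{-i})$, the $i$-th term, when integrated against $\pi(dx)$, has joint measure
\begin{equation*}
\pi(dx)\,Q_i(x,dy_i)\,\bar{Q}^{y_i}_i(x,d\by_{-i})\,\bar{Q}^x_i(y_i,d\by'_{-i})\,\bar{w}_i(x,y_i;\by_{-i})\,\alpha_i(x,y_i,\by_{-i},\by'_{-i})
\end{equation*}
on $(x,y_i,\by_{-i},\by'_{-i})\in\sX^{2K}$. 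The key observation is that the definition of $\alpha_i$ in Algorithm \ref{alg:MTM} is exactly the Metropolis--Hastings acceptance ratio that renders
\begin{equation*}
\pi(dx)\,Q_i(x,dy_i)\,\bar{w}_i(x,y_i;\by_{-i})\,\alpha_i(x,y_i,\by_{-i},\by'_{-i})
\end{equation*}
symmetric under the swap $(x,\by_{-i})\leftrightarrow(y_i,\by'_{-i})$. Combined with the trivial symmetry of $\bar{Q}^{y_i}_i(x,d\by_{-i})\,\bar{Q}^x_i(y_i,d\by'_{-i})$ under the same swap, the whole joint measure is invariant under this involution. Integrating out $(\by_{-i},\by'_{-i})$ then gives detailed balance at the $(x,y_i)$ level for each $i$, and summing over $i$ closes the argument.

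The main bookkeeping obstacle is tracking the factor $Q_i(x,dy_i)$ versus the full $Q(x,d\by_{1:K})$ cleanly, and making sure the roles of $\by_{-i}$ (sampled from the original proposal) and $\by'_{-i}$ (sampled afresh from $\bar{Q}^x_i(y_i,\cdot)$) are correctly swapped in the involution. Once this is done, the rest is standard MH detailed balance applied to an augmented target; I do not expect any delicate technicality beyond ensuring that $\pi$ and $Q_i(x,\cdot)$ admit compatible Radon--Nikodym derivatives wherever needed so that the ratio in $\alpha_i$ is well defined $\pi\otimes Q$-a.e., which is a mild implicit assumption already present in the definition of the scheme.
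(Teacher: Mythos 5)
Your proposal is correct and follows essentially the same route as the paper: unrolling the algorithm (with Fubini) to identify $h_i$ and check $\sum_i h_i\leq 1$ via $\sum_i\bar w_i=1$ and $\alpha_i\leq 1$, then proving reversibility term by term using the disintegration $Q(x,d\by_{1:K})=Q_i(x,dy_i)\bar Q_i^{y_i}(x,d\by_{-i})$ and the symmetry of $\min\bigl(\pi(dx)Q_i(x,dy_i)\bar w_i(x,y_i;\by_{-i}),\,\pi(dy_i)Q_i(y_i,dx)\bar w_i(y_i,x;\by'_{-i})\bigr)$ together with that of $\bar Q_i^{y_i}(x,d\by_{-i})\bar Q_i^{x}(y_i,d\by'_{-i})$ under the swap $(x,\by_{-i})\leftrightarrow(y_i,\by'_{-i})$. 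The paper merely makes your "key observation" explicit by writing that symmetric object as the minimum of two measures $\mu(\by_{-i},\by'_{-i};dx,dy_i)$; no substantive difference.
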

\begin{proof}
Let $P$ be the transition kernel of Algorithm \ref{alg:MTM}, i.e.\ $P( x,A) = \Pr( X_{t+1}\in A\mid  X_t= x)$ for all $ x\in\sX$ and $A\subseteq \sX$ with $ X_t$ and $ X_{t+1}$ as in Algorithm \ref{alg:MTM}.
For every $A \subset \sX$ and $ x \notin A$, integrating over possible values of $\by_{1:K}$, $i$ and $\by'_{-i}$ in Algorithm \ref{alg:MTM}, we obtain
\begin{equation} \label{help:transition:mt1}
\begin{aligned}
& P( x,A) = \, \sum_{i = 1}^K \int_{\sX^{K}}  \mathbf{1}_{A}( y_i)\bar{w}_i( x,  y_i;\by_{-i})
\Big( 
\int_{\sX^{K-1}}  
\alpha_i( x,  y_i,\by_{-i},\by'_{-i})
\bar{Q}^{ x}_i( y_i, d\by'_{-i}) 
\Big)
Q( x, d\by_{1:K})\,.
\end{aligned}
\end{equation}
To prove that $P$ is $\pi$-reversible, 
take $A,B \subseteq X$ with $A \cap B =  \varnothing$.
Then, by \eqref{help:transition:mt1} and using the decomposition 
$Q( x, d\by_{1:K})=Q_i( x,d y_i)\bar{Q}^{ y_i}_i( x, d\by_{-i})$, we have
\begin{align}
&\int_{\sX} \mathbf{1}_{B}( x) P( x,A)\pi(d x)  
\nonumber \\
&= \sum_{i = 1}^K
\int_{\sX^{2K}}  
\mathbf{1}_{B}( x) \mathbf{1}_{A}( y_i)
\bar{w}_i( x,  y_i;\by_{-i})    
\alpha_i( x,  y_i,\by_{-i},\by'_{-i})
\bar{Q}^{ x}_i( y_i, d\by'_{-i}) 
\bar{Q}^{ y_i}_i( x, d\by_{-i})Q_i( x, d y_i)\pi(d x)
\nonumber \\
&=  \sum_{i = 1}^K \int_{\sX^{2 K}}  \mathbf{1}_{B}( x) \mathbf{1}_{A}( y_i)   
\mu(\by_{-i},\by'_{-i};d x,d y_i)
\bar{Q}^{ y_i}_i( x, d\by_{-i})\bar{Q}^{ x}_i( y_i, d\by'_{-i})\label{eq:rev_MTM_1}\\
&= \int_{\sX}  \mathbf{1}_{A}( y_i) \pi(d y) P( y,B),\label{eq:rev_MTM_2}
\end{align}
where for every  $\by_{-i},\by'_{-i}\in\sX^{K-1}$, $\mu(\by_{-i},\by'_{-i};\cdot,\cdot)$ is a measure on $\sX^2$ defined as
\begin{align*}
\mu(\by_{-i},\by'_{-i};d x,d y_i)
=&
\alpha_i( x,  y_i,\by_{-i},\by'_{-i})\bar{Q}^{ x}_i( y_i, d\by'_{-i}) \pi(d x)\\
=&
\min\left\{
\pi(d x)Q_i( x,d y_i)\bar{w}_i( x,  y_i;\by_{-i})
\;,\;
\pi(d y_i)Q_i( y_i,d x)\bar{w}_i( y_i,  x;\by'_{-i})
\right\},
\end{align*}
and  the equality in \eqref{eq:rev_MTM_2} results from the symmetry with respect to $( x,\by'_{-i})$ and $( y_i,\by_{-i})$ of the expression in \eqref{eq:rev_MTM_1}.
The minimum of two measures in the definition of $\mu$ indicates the measure having density equal to the minimum of the two densities (with respect to an arbitrary common dominating measure).
Finally, note that $\sum_{i = 1}^K h_i( x, \by_{1:K}) \leq 1 $ because
\begin{equation*}
\sum_{i = 1}^K h_i( x, \by_{1:K}) =   \sum_{i = 1}^K\bar{w}_i( x,  y_i;\by_{-i}) \int_{\sX^{K-1}}   \alpha_i( x,  y_i,\by_{-i},\by'_{-i}) \bar{Q}^{ x}_i( y_i, d\by'_{-i})
\leq\sum_{i = 1}^K\bar{w}_i( x,  y_i;\by_{-i})=1
,
\end{equation*}
since $\alpha_i\leq 1$.
\end{proof}

Over the last two decades, many MTM variants have been proposed, which differ in the choice of kernel $Q$ and weight functions $w_i$. 
We briefly review some of those below.
The original formulation of \citet{liu2000multiple} requires the proposal density to be the product of $K$ independent components, i.e.\ $Q(x, d\by_{1:K}) = \prod_{i = 1}^K Q_i(x,d y_i)$. Practical choices for the weight functions therein include $w(x,y) = \pi(y)$, where $\pi$ denotes the target density. 
\citet{craiu2007acceleration} suggest to use negatively correlated proposals distributions to explore the sample space more efficiently. This is done by considering Gaussian random variables that, given $ X_t =  x$, are marginally distributed as $N( x, \sigma^2 \I_d)$ and for which the average distance between two proposed states $ Y_i,  Y_j,$  $i,j = 1,\dots,K$ is maximal or, equivalently, the pairwise correlation between each couple $ Y_i,  Y_j,$ $i \neq j$ is minimal. Such a result is obtained by considering proposals of the form $( Y_1,\dots, Y_K) \sim N( \boldsymbol{\mu}_{\textsc{anti}}, \boldsymbol{\Sigma}_{\textsc{anti}} ) $ where 
$\boldsymbol{\mu}_{\textsc{anti}} = ( x,\dots, x) \in \R^{dK}$ and $\boldsymbol{\Sigma}_{\textsc{anti}} \in \R^{dK\times dK} $ is positive definite matrix defined as
\begin{equation*}
\boldsymbol{\Sigma}_{\textsc{anti}} = 
\begin{bmatrix}
\sigma^2 \I_d & \boldsymbol{\Psi} & \boldsymbol{\Psi}& \cdots & \boldsymbol{\Psi}     \\
\boldsymbol{\Psi} &\sigma^2 \I_d & \boldsymbol{\Psi}& \cdots & \boldsymbol{\Psi} \\
\vdots & \ddots &\ddots & \ddots &\vdots \\
\boldsymbol{\Psi} & \cdots & \cdots& \cdots &\sigma^2 \I_d \\
\end{bmatrix}  ,
\end{equation*}
with $\boldsymbol{\Psi} = -(\sigma^2/(K-1))\I_d$. Making use of quasi Monte Carlo techniques, \citet{bedard2012scaling} develop a multiple-try version of the classical hit-and-run algorithm which removes the need to sample the $K-1$ additional shadow points to maintain the chain $\pi$-invariant. \cite{casarin2013interacting} develop a multiple-try metropolis algorithm in which the $K$ proposal distributions are independent but not identically distributed. 
A general formulation of MTM schemes similar to the one we provide in Algorithm \ref{alg:MTM} is given in \citet{pandolfi2010generalization}.
\citet{gagnon2023improving} identify the use of globally-balanced weighting functions as the reason for some pathological behaviors of multiple-try metropolis observed by \citet{martino2017issues}. As a solution to this issue, the authors propose the adoption of weighting function which are locally balanced \citep{zanella20}, i.e., satisfy $w( x, y)=g(\pi( y)/\pi( x))$ or $w( x, y)=g(\pi( y)q(y,x)/\pi( x)q(x,y))$ with $g\, : \, \R \to \R^{+}$ such that $g(t)/g(1/t) = t$ for every $t>0$, where $\pi$ and $q$ denote densities of, respectively, $\pi(d x)$ and $Q( x,d y)$ with respect to a common dominating measure. 

\subsection{Generalised Metropolis-Hastings methods} \label{section:GMHA}
Generalised Metropolis-Hastings (GMH) \citep{tjelmeland2004using,calderhead2014general,holbrook2023generating}
are a class of MP-MCMC schemes where the selection probabilities $h$ in Algorithm \ref{alg:meta} are chosen so that it holds
\begin{align}\label{eq:rev_GMH}
h_i( x,\by_{1:K}) =& r_i( x,\by_{1:K})
h_i( y_i,( x,\by_{-i})),& x\in\sX,\,\by_{1:K}\in\sX^K\,,i\in\{1,\dots,K\},
\end{align}
where $r_i:\sX\times \sX^{K}\to[0,\infty)$ denotes the Radon-Nikodym derivative
\begin{equation}\label{eq:MH_ratio}
r_i( x,\by_{1:K})=\frac{\pi(d y_i) Q( y_i,d( x,\by_{-i}))}{\pi(d x) Q( x,d\by_{1:K} )}\,,
\end{equation}
and, with a slight abuse of notation, we denote
$$
( x,\by_{-i})=( y_1,\dots, y_{i-1}, x, y_{i+1},\dots, y_{K})\in\sX^K.
$$
Condition \eqref{eq:rev_GMH} is a sufficient (though not necessary) condition for the resulting MP-MCMC scheme to be $\pi$-reversible, as shown by the following proposition.
\begin{proposition} \label{prop:kernel:gm}
If \eqref{eq:rev_GMH} holds, the kernel $P^{(K)}$ defined in \eqref{kernel:meta:1} is $\pi$-reversible.
\end{proposition}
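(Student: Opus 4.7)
The plan is to verify the detailed balance condition directly. Because $P^{(K)}(x,\{x\})$ is defined as a complement, it suffices to show
$$\int_{A}\pi(dx)\,P^{(K)}(x,B)=\int_{B}\pi(dy)\,P^{(K)}(y,A)$$
for every pair of disjoint measurable sets $A,B\subseteq\sX$ (the case $A\cap B\neq\varnothing$ then follows by splitting $A$ and $B$ into their intersection and symmetric-difference pieces, the diagonal contribution being trivial).

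Using \eqref{kernel:meta:1}, the left-hand side expands as
$$\sum_{i=1}^{K}\int_{\sX\times\sX^{K}}\mathbf{1}_{A}(x)\mathbf{1}_{B}(y_{i})\,h_{i}(x,\by_{1:K})\,\pi(dx)\,Q(x,d\by_{1:K}).$$
The key identity supplied by \eqref{eq:rev_GMH} and the definition \eqref{eq:MH_ratio} of $r_{i}$ is
$$h_{i}(x,\by_{1:K})\,\pi(dx)\,Q(x,d\by_{1:K})=h_{i}(y_{i},(x,\by_{-i}))\,\pi(dy_{i})\,Q(y_{i},d(x,\by_{-i})),$$
which is just a restatement of \eqref{eq:rev_GMH} multiplied through by the reference measure. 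Substituting this into each summand of the expansion above turns the $i$-th integrand into
$$\mathbf{1}_{A}(x)\mathbf{1}_{B}(y_{i})\,h_{i}(y_{i},(x,\by_{-i}))\,\pi(dy_{i})\,Q(y_{i},d(x,\by_{-i})).$$

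Next, for each fixed $i$ I would perform the change of variables that swaps the roles of $x$ and $y_{i}$: rename $y_{i}\mapsto x'$ and $x\mapsto y'_{i}$, leaving $y_{j}=y'_{j}$ for $j\neq i$. Under this relabeling the vector $(x,\by_{-i})$ becomes $\by'_{1:K}$ (with $x$ sitting in position $i$), the indicator $\mathbf{1}_{A}(x)$ becomes $\mathbf{1}_{A}(y'_{i})$, $\mathbf{1}_{B}(y_{i})$ becomes $\mathbf{1}_{B}(x')$, and the measure $\pi(dy_{i})\,Q(y_{i},d(x,\by_{-i}))$ becomes $\pi(dx')\,Q(x',d\by'_{1:K})$. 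The $i$-th summand thereby equals
$$\int_{\sX\times\sX^{K}}\mathbf{1}_{B}(x')\mathbf{1}_{A}(y'_{i})\,h_{i}(x',\by'_{1:K})\,\pi(dx')\,Q(x',d\by'_{1:K}),$$
and summing over $i$ and invoking \eqref{kernel:meta:1} once more recovers $\int_{B}\pi(dx')\,P^{(K)}(x',A)$, completing the verification of detailed balance.

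The proof is essentially a bookkeeping exercise, and the main (minor) obstacle is making the change of variables rigorous when $Q(x,\cdot)$ is only a Markov kernel rather than a density. This can be handled by noting that \eqref{eq:MH_ratio} is a Radon--Nikodym statement on the product space $\sX^{1+K}$: the measures $\pi(dx)\,Q(x,d\by_{1:K})$ and $\pi(dy_{i})\,Q(y_{i},d(x,\by_{-i}))$ are mutually absolutely continuous on the support of $r_{i}$, so \eqref{eq:rev_GMH} legitimately identifies them after reweighting by $h_{i}$, and the swap $x\leftrightarrow y_{i}$ is then just a measurable bijection of $\sX^{1+K}$ onto itself under which the joint measure is transported as claimed. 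The side condition $\sum_{i}h_{i}\leq 1$ is built into the requirement $h:\sX^{1+K}\to S_{K}$, so no further verification is needed to ensure $P^{(K)}$ is a sub-Markov kernel away from the diagonal.
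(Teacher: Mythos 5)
Your proof is correct and follows essentially the same route as the paper's: both expand $\int_B \pi(dx)\,P^{(K)}(x,A)$ via \eqref{kernel:meta:1}, use \eqref{eq:rev_GMH} together with the Radon--Nikodym identity \eqref{eq:MH_ratio} to transport the measure $h_i(x,\by_{1:K})\,\pi(dx)\,Q(x,d\by_{1:K})$ onto $h_i(y_i,(x,\by_{-i}))\,\pi(dy_i)\,Q(y_i,d(x,\by_{-i}))$, and then reidentify the sum as $\int_A \pi(dy)\,P^{(K)}(y,B)$. The only difference is that you spell out the relabeling $x\leftrightarrow y_i$ and its measure-theoretic justification, which the paper leaves implicit in its final equality.
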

\begin{proof}
By \eqref{kernel:meta:1}, for every $A,B \subseteq X$ with $A \cap B =  \varnothing$, we have
\begin{align*}
\int_{B}P^{(K)}( x, A)\pi(d x)
= &
\sum_{i= 1}^{K}
\int_{\sX^{K+1} } 
\mathbf{1}_{B}( x)
\mathbf{1}_{A}( y_i)
h_i( x,\by_{1:K})
\pi(d x) Q( x, d\by_{1:K})
\\= &  
\sum_{i= 1}^{K}
\int_{\sX^{K+1} } 
\mathbf{1}_{B}( x)
\mathbf{1}_{A}( y_i)
h_i( y_i,( x,\by_{-i}))
r_i( x,\by_{1:K})
\pi(d x) Q( x, d\by_{1:K})
\\= &
\sum_{i= 1}^{K}
\int_{\sX^{K+1} } 
\mathbf{1}_{B}( x)
\mathbf{1}_{A}( y_i)
h_i( y_i,( x,\by_{-i}))
\pi(d y_i) Q( y_i, d( x,\by_{-i}))
\\= &
\int_{A}P^{(K)}( y_i, B)\pi(d y_i),
\end{align*}
where the second equality follows from \eqref{eq:rev_GMH}, the third by definition of Radon-Nykodim derivative and the fourth again by \eqref{kernel:meta:1}.
\end{proof}
\begin{remark}[Data augmentation construction]
GMH schemes are usually presented and derived starting from a data augmentation approach \citep{tjelmeland2004using}, which works with Markov chains defined on an augmented state space $\sX^{K+1} \times \{0,1,\dots,K\}$. 
In this construction, GMH schemes can be interpreted as Markov chains alternating specific conditional updates in the augmented space, see Section 2 of \cite{tjelmeland2004using} for details. 
In our exposition, we focused on the single chain interpretation and construction to highlight how GMH schemes fit into the framework of Algorithm \ref{alg:meta}. 
In most situation, the two constructions end up being equivalent even if the data augmentation one allows in principle for more flexibility, see e.g.\ \citet{holbrook2023generating} for more discussion.
\end{remark}

\begin{remark}[Condition \eqref{prop:kernel:gm} for singular measures]
Similarly to what is done for the classical MH algorithm, Proposition \ref{prop:kernel:gm} applies to general measures $\pi$ and kernel $Q$, by requiring condition \eqref{eq:rev_GMH} to hold on the set $R\subseteq \sX^{1+K}$ where the two measures in \eqref{eq:MH_ratio} are mutually absolutely continuous and 
by setting $h_i$ to be $0$ on the complement of $R$; see Section 2 of \cite{tierney1998note} for details in the classical MH case.
\end{remark}
\citet{tjelmeland2004using} discusses various choices of functions satisfying \eqref{eq:rev_GMH}, such as the one described in Example \ref{ex:GMH} of the paper, which is denoted as (\textit{T1}) in \citet{tjelmeland2004using}.
The latter is not the optimal choice in general, even if the suboptimality factor is usually small and decreases with $K$.
See \citet{tjelmeland2004using,calderhead2014general, holbrook2023generating} for more details. As mentioned in Example \ref{ex:GMH} of the paper, good choices of $Q$ are ones where the density $q$ of $Q$, with respect to some dominating measure $\mu^K$ on $\sX^K$,
satisfies 
\begin{equation}\label{eq:exch}
q( y_i,d( x,\by_{-i}))=q( x,\by_{1:K})\,,    
\end{equation}
so that \eqref{acc:pr:tj} simplifies to
\begin{equation*}
h_{i}( x,\by_{1:K}) = 
\frac{
\pi( y_i) 
}{
\pi( x)
+
\sum_{j =1}^{K} 
\pi( y_j)
}\,,
\end{equation*}
where $\pi$ denotes the density of $\pi$ with respect to $\mu$.
Intuitively, \eqref{eq:exch} ensures that, when $\by_{1:K}$ is sampled from $Q( x,\cdot)$, then $( x,\by_{-i})$ is also a plausible sample from $Q( y_i,\cdot)$ for $i=1,\dots,K$, which is required for $h_{i}( x,\by_{1:K})$ not to be too small.
Examples are proposals satisfying \eqref{eq:exch} include the proposal (P1) of \citet{tjelmeland2004using}, which is described in Example \ref{ex:GMH} and it involves the simulation of a latent variable $z\sim N( x, (\sigma^2/2)\I_d)$,
and the simplicial sampler of \citet{holbrook2023generating}. 
The latter generates $K$ candidate points by randomly rotating a $K$-dimensional simplex around the current state of the chain. Given $K$ vertices $v_1, \dots, v_{K},\in \R^d$ satisfying $\|v_i\| = \lambda$, $\|v_i - v_j \| = \lambda$, for $i\neq j$ and $\lambda>0$, a random rotation $\mathbf{Q}$ is first generated $\mathbf{Q} \sim H$, where $H$ denotes the Haar measure on $\{\mathbf{A}\in \R^{d \times d} \, : \,\mathbf{A}^{\top}\mathbf{A} = \I_d \}$. Then, the proposed states are obtained as $ y_i =  x + \mathbf{Q}v_i$, $i = 1,\dots,K$. See also \citet{tjelmeland2004using,calderhead2014general,luo2019multiple,holbrook2023generating} for further options concerning the specification of both $Q$ and $h$. 

\subsection{General remarks}
To the best of our knowledge, there is no strict ordering of performances between MTM and GMH schemes in general.
In particular, GMH schemes do not require the generation of additional shadow points and therefore, when considering the same number of iterations, they tend to have a computational cost that is approximately half of that of MTM schemes \citep{holbrook2023generating}. 
On the other hand, in order to satisfy \eqref{eq:exch}, common GMH proposals end up proposing $K$ positively correlated points, which tends to reduce sampling efficiency (relative to MTM which can use, for example, antithetic proposals).
This in principle results in a trade-off between computational cost and sampling efficiency between MTM and GMH, so that the resulting best performing scheme is in general case dependent.

Crucially, all the schemes discussed above, as well various others proposed in the literature, fall into the general framework of Algorithm \ref{alg:meta}.
Thus, the results of Section \ref{sec:teo:results} applies to all of them, suggesting that they all share fundamental limitations in terms of potential improvements with respect to standard MH as $K$ increases.

\begin{remark} 
There are other ways to parallelise MCMC schemes proposed in the literature that, while being similar to multiproposal schemes, do not technically fit into the framework of Algorithm \ref{alg:meta}.
Examples include prefetching methods \citep{brockwell2006parallel} or the multi-core MH schemes discussed in \citet[Ch.4]{power2020}.
At a high level, the difference with Algorithm \ref{alg:meta} is that, instead of using $K$ parallel evaluations to simulate a single step of a faster Markov chain, these schemes use parallel evaluations to simulate multiple steps of the original MH chain.
Thus, since they do not fit (at least not in a straightforward way) into the framework of Algorithm \ref{alg:meta}, the results we develop in this work do not directly apply to them. 
Nonetheless, we note that a poly-logarithmic gain in efficiency with respect to $K$, like the one we prove for MP-MCMC methods in Section \ref{sec:neg3}, has also been derived with an optimal scaling analysis for some of those methods in \citet[Ch.4]{power2020}.
\end{remark}

\section{Proofs and additional technical results}
\begin{proposition}\label{prop:equi:kern:MA}
Algorithm \ref{alg:meta} produces a Markov chain with transition kernel as in \eqref{kernel:meta:1}. 
\end{proposition}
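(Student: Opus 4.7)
The plan is to directly unwind the algorithmic description of Algorithm \ref{alg:meta} via the law of total probability, first handling the case $x \notin A$ and then extending to the general case by taking complements.

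Let $P(x, A) = \Pr(X_{t+1} \in A \mid X_t = x)$ denote the one-step transition kernel of Algorithm \ref{alg:meta}. For a fixed $x \in \sX$ and a measurable $A \subseteq \sX$ with $x \notin A$, I would condition on the proposed vector $\by_{1:K} \sim Q(x, \cdot)$. By the selection step of the algorithm, given $(x, \by_{1:K})$ the event $\{X_{t+1} \in A\}$ occurs precisely when the $i$-th proposed point is chosen for some index $i$ with $y_i \in A$ — the ``stay'' outcome contributes nothing because $x \notin A$ — so the conditional probability equals $\sum_{i=1}^K \mathbf{1}_A(y_i)\, h_i(x, \by_{1:K})$. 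Averaging over $\by_{1:K} \sim Q(x, \cdot)$ via the tower property, and exchanging the finite sum with the integral (trivially valid since the integrand is nonnegative and bounded by $1$), reproduces exactly the expression in \eqref{kernel:meta:1}.

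For the remaining case $x \in A$, I would decompose $A = (A \setminus \{x\}) \cup (A \cap \{x\})$. The first piece avoids $x$ and is handled by the previous paragraph, so it matches $P^{(K)}(x, A \setminus \{x\})$. For the atom at $x$, since $P(x, \cdot)$ is a probability measure, $P(x, \{x\}) = 1 - P(x, \sX \setminus \{x\})$, and applying the first step to $\sX \setminus \{x\}$ gives $P(x, \sX \setminus \{x\}) = P^{(K)}(x, \sX \setminus \{x\})$; this matches the defining equation $P^{(K)}(x, \{x\}) = 1 - P^{(K)}(x, \sX \setminus \{x\})$. Summing the two contributions yields $P(x, A) = P^{(K)}(x, A)$.

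There is no serious obstacle: the statement is essentially an equivalence between an algorithmic description and its distributional form. The only points that deserve a line of care are (i) the requirement $h(x, \by_{1:K}) \in S_K$, which guarantees that the stay probability $1 - \sum_{i} h_i(x, \by_{1:K})$ lies in $[0,1]$ so that the sampling step of the algorithm is well-defined, and (ii) the measurability of $h$ and the maps $\by_{1:K} \mapsto \mathbf{1}_A(y_i)$, needed so that the integrals in \eqref{kernel:meta:1} make sense; both are implicit in the framework of Section \ref{sec:review}.
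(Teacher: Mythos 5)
Your proof is correct and follows essentially the same route as the paper's: condition on the proposed vector $\by_{1:K}$ and the selected index, observe that for $x \notin A$ the stay outcome contributes nothing, and integrate against $Q(x,\cdot)$ to recover \eqref{kernel:meta:1}, with the case $x \in A$ handled via $P(x,\{x\}) = 1 - P(x,\sX\setminus\{x\})$. Your explicit treatment of the atom at $x$ and the measurability remarks are slightly more detailed than the paper's one-line closing comment, but the argument is the same.
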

\begin{proof}
Let $P$ be the transition kernel of Algorithm \ref{alg:meta}, i.e.\ for all $ x\in\sX$ and $A\subseteq \sX$ define
\begin{align*}
P( x,A) = &\Pr( X_{t+1}\in A\mid  X_t= x)\,,
\end{align*}
with $ X_t$ and $ X_{t+1}$ as in Algorithm \ref{alg:meta}.
Denoting by $\bY_{1:K}\sim Q( x,\cdot)$ the random variables generated in the first step of Algorithm \ref{alg:meta} (using $\by_{1:K}$ only for their realizations for clarity) and 
by $I$ the index of the state selected in the second step of Algorithm \ref{alg:meta}, with the convention $I = 0$ if $ X_{t+1} =  x$, for every $A \subset \sX$ and $ x \in \sX\backslash A$, we have
\begin{align} 
P( x,A) = & \sum_{i = 1}^K 
\Pr( X_{t+1}\in A,I=i\mid  X_t= x) 
\nonumber\\=& 
\sum_{i = 1}^K \int_{\sX^K} \mathbf{1}_{A}( y_i) \Pr( I = i\mid  X_t= x, \bY_{1:K}=\by_{1:K}) Q( x, d\by_{1:K})
\nonumber\\
=& 
\sum_{i = 1}^K \int_{\sX^K} \mathbf{1}_{A}( y_i) h_i( x,\by_{1:K}) Q( x, d\by_{1:K})\,,\label{help:equiv:rep:1}
\end{align}
which implies that $P=P^{(K)}$ with $P^{(K)}$ as in \eqref{kernel:meta:1}. Note that \eqref{help:equiv:rep:1} defines the value of $P( x,A)$ even if $ x \in A $ since $P( x,A) = 1 - P( x,\sX\backslash A)$  with $ P( x,\sX\backslash A)$ defined as in \eqref{help:equiv:rep:1}.
\end{proof}
\subsection{Proof of Theorem \ref{theorem:peskun}}
\begin{proof}
Let $A \subset \sX$ and $ x \notin A$.
By \eqref{kernel:meta:1} and $h_i(  x, \by_{1:K})\leq 1$, we have 
\begin{equation}\label{eq:ineq_1}
P^{(K)}( x,A) \leq \sum_{i = 1}^K \int_{\sX^{K}}  \mathbf{1}_{A}( y_i)  Q(  x, d\by_{1:K}) =  \sum_{i = 1}^K Q_i( x,A).
\end{equation}
Using first $\pi$-reversibility of $P^{(K)}$ and then \eqref{eq:ineq_1}, we have
\begin{align}
\mathbf{1}( x\neq y)
\pi(d x)P^{(K)}( x,d y)
= & 
\mathbf{1}( x\neq y)
\min \left(  
\pi(d x)P^{(K)}( x,d y)
,  
\pi(d y)P^{(K)}( y,d x)
\right) 
\nonumber\\
\leq & 
\mathbf{1}( x\neq y)
\min \left(  \pi(d x)\sum_{i=1}^K Q_i(  x,d y),  \pi(d y)\sum_{i=1}^KQ_i( y,d x)\right) 
\label{eq:min_two_measures}\,,
\end{align}
where the above inequalities apply to measures defined on the product space $\sX\times\sX$ (as in e.g.\ \citealp[Sec.2]{tierney1998note}), the minimum of two measures indicates the measure having density equal to the minimum of the two densities (with respect to an arbitrary common dominating measure), and $\mathbf{1}( x\neq y)$ denotes the function equal to $1$ if $ x\neq  y$ and zero otherwise.
Then, using $\tilde{Q}=K^{-1}\sum_{i=1}^KQ_i$ and the definition of MH kernel, the right-hand side of \eqref{eq:min_two_measures} is equal to 
\begin{align}
\mathbf{1}( x\neq y)
K \min \left(  \pi(d x)\tilde{Q}(  x,d y),  \pi(d y)\tilde{Q}( y,d x)\right) 
\nonumber= & 
\mathbf{1}( x\neq y)
K \pi(d x)\tilde{Q}( x,d y) \min \left(  1, \frac{ \pi(d y)\tilde{Q}( y,d x)}{\pi(d x)\tilde{Q}( x,d y)} \right)
\nonumber\\=&
\mathbf{1}( x\neq y)
K \pi(d x)\tilde{P}( x,d y)
\,.
\label{eq:min_two_measures_2}
\end{align}
Combining \eqref{eq:min_two_measures} and \eqref{eq:min_two_measures_2} gives \eqref{mt:marginal:upper:bound:mh}.
\end{proof}

\subsection{Proof of Corollary \ref{coroll:peskun_gap}}
\begin{proof}
The statement follows directly from Theorem \ref{theorem:peskun} and classical results about the implications of Peskun ordering, e.g.\ \citet[Lemma 32]{andrieu2018uniform}.
\end{proof}

\subsection{Proof of Theorem \ref{theo:1}} \label{proof:theo:1}
The following lemma is a classical bound that holds for any $\pi$-reversible chain, of which we report a self-contained proof for the readers convenience.
\begin{lemma}\label{lemma:conductance_bound}
Let $P$ be a $\pi$-reversible kernel. Then
\begin{equation} \label{eq:cond_bound}
Gap(P) \leq 2 \essinf_{ x\in\sX} P( x,\sX \backslash \{ x\}),
\end{equation}
where $\essinf$ refers to the essential infimum with respect to $\pi$.
\end{lemma}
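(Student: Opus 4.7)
The plan is to apply the variational characterization of the spectral gap with the indicator function of a carefully chosen set as a test function. For any measurable $A\subseteq\sX$ with $0<\pi(A)<1$, taking $f=\mathbf{1}_A$ and using $\pi$-reversibility to collapse the two cross-terms, a short calculation gives
\begin{equation*}
\tfrac{1}{2}\int_{\sX^2}\{f(y)-f(x)\}^2\pi(dx)P(x,dy) \;=\; \int_A P(x,A^c)\,\pi(dx),
\end{equation*}
while $\mathrm{Var}_\pi(f)=\pi(A)(1-\pi(A))$. The definition of the spectral gap therefore yields
\begin{equation*}
Gap(P)\;\leq\;\frac{\int_A P(x,A^c)\,\pi(dx)}{\pi(A)(1-\pi(A))}.
\end{equation*}

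To inject the essential infimum into the right-hand side I would use the elementary observation that for every $x\in A$ one has $\{x\}\subseteq A$, so $A^c\subseteq \sX\setminus\{x\}$ and consequently $P(x,A^c)\leq P(x,\sX\setminus\{x\})$. Writing $c:=\essinf_{x\in\sX}P(x,\sX\setminus\{x\})$ and fixing $\varepsilon>0$, I would then restrict attention to the sublevel set $E_\varepsilon:=\{x:P(x,\sX\setminus\{x\})<c+\varepsilon\}$, which has strictly positive $\pi$-measure by definition of essinf. Picking a measurable $A\subseteq E_\varepsilon$ with $0<\pi(A)\leq 1/2$, the previous display implies $\mathcal{E}(\mathbf{1}_A,\mathbf{1}_A)\leq (c+\varepsilon)\pi(A)$ and $\mathrm{Var}_\pi(\mathbf{1}_A)\geq \pi(A)/2$, so that $Gap(P)\leq 2(c+\varepsilon)$. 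Letting $\varepsilon\downarrow 0$ completes the argument.

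The only potential obstacle is the existence of such an $A\subseteq E_\varepsilon$ with mass at most $1/2$. This is automatic whenever $\pi$ restricted to $E_\varepsilon$ is non-atomic (by exhaustion) or whenever $\pi(E_\varepsilon)\leq 1/2$ (take $A=E_\varepsilon$), and the only remaining corner case is when $E_\varepsilon$ consists of a single $\pi$-atom of mass exceeding $1/2$; this can be absorbed either by working with the complement together with reversibility, or simply by observing that in this regime $c$ is necessarily bounded away from $0$ and the trivial bound $Gap(P)\leq 1$ suffices. The core of the argument is really just the test-function computation together with the set-theoretic inclusion $A^c\subseteq\sX\setminus\{x\}$; I expect everything else to be routine bookkeeping.
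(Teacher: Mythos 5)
Your main line of argument is correct and, despite the different packaging, is essentially the paper's proof: the paper first invokes the standard conductance inequality $Gap(P)\leq 2\Phi(P)$ and then bounds $\int_A P( x,A^c)\pi(d x)\leq \pi(A)\esssup_{ x\in A}P( x,\sX\setminus\{ x\})$, whereas your computation with $f=\mathbf{1}_A$ (Dirichlet form equal to $\int_A P( x,A^c)\pi(d x)$ by reversibility, variance equal to $\pi(A)(1-\pi(A))\geq \pi(A)/2$) is precisely the proof of that conductance inequality, and the inclusion $A^c\subseteq\sX\setminus\{ x\}$ for $ x\in A$ is the same step. So whenever, for each $\varepsilon>0$, the sublevel set $E_\varepsilon$ admits a measurable subset of $\pi$-mass in $(0,1/2]$, your argument is complete and coincides with the paper's.

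The issue is the corner case you flag at the end: it is not routine bookkeeping, and neither of your two proposed patches works, because the statement is actually false there. Take $\sX=\{a,b\}$ with $\pi(\{a\})=0.9$, $\pi(\{b\})=0.1$, $P(a,\{b\})=p$ and $P(b,\{a\})=9p$ for small $p>0$: this kernel is $\pi$-reversible, $\essinf_{ x}P( x,\sX\setminus\{ x\})=p$, but the nontrivial eigenvalue is $1-10p$, so $Gap(P)=10p>2p$. Here $E_\varepsilon=\{a\}$ is a single atom of mass above $1/2$, the constant $c=p$ is not bounded away from $0$ (so the trivial bound does not suffice), and passing to the complement via reversibility only yields $Gap(P)\leq P(a,\{b\})/(1-\pi(\{a\}))=10p$, which is tight but not $2c$. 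To be fair, the paper's own proof contains the identical unjustified step: its claim that for every $\delta>0$ there is an $A$ with $0<\pi(A)\leq 0.5$ and $\esssup_{ x\in A}P( x,\sX\setminus\{ x\})<\essinf_{ x}P( x,\sX\setminus\{ x\})+\delta$ fails for this example. The lemma therefore needs a hypothesis excluding a dominant atom (for instance $\pi$ non-atomic, or that every set of positive $\pi$-measure contains a subset of mass at most $1/2$); this is harmless for the paper, where the bound is ultimately used under Assumption \ref{cond:1} so that $\pi$ is absolutely continuous with respect to Lebesgue, but you should state that restriction rather than assert that the atomic case can be absorbed.
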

\begin{proof}[ of Lemma \ref{lemma:conductance_bound}]
We start from the bound 
\begin{equation} \label{help:thm:0}
Gap(P) \leq 2 \Phi(P),
\end{equation}
where
\begin{equation} \label{def:conductance}
\Phi(P) = \inf\Biggl\{ \frac{\int_{A}P( x, A^c)\pi(d x)}{\pi(A)}\, : \, A \subset \sX,  0 < \pi(A)
\leq 0.5 \Biggr\} , 
\end{equation}
is the conductance of $P$ (see e.g. Lemma 5 of \citealp{andrieu2022explicit}).
The numerator in \eqref{def:conductance} can be upper bounded as
\begin{equation*} 
\begin{aligned}
\int_{A} P( x, A^c) \pi(d x) 
\leq & 
\int_{A} P( x, \sX\backslash\{ x\})
\pi(d x)
\leq 
\pi(A) \esssup_{ x \in A} P( x, \sX\backslash\{ x\})\,.
\end{aligned}
\end{equation*}
so that, by \eqref{help:thm:0}, it follows
\begin{equation} \label{help:thm:1}
Gap(P) \leq 2 \inf \Biggl\{   \esssup_{ x \in A} P( x, \sX\backslash\{ x\})  \, \, : \, A \subset \sX,  0 < \pi(A)
\leq 0.5  \Biggr\}.    
\end{equation}
Note now that, for any $\delta >0$, there exist $A$ such that $0<\pi(A) \leq 0.5$ and $\esssup_{ x \in A} P( x, \sX\backslash\{ x\})   < \essinf_{ x \in \sX} P( x, \sX\backslash\{ x\})   + \delta$ and thus, by \eqref{help:thm:1}, 
\begin{equation*}
Gap(P) \leq 2 \essinf_{ x \in \sX}P( x, \sX\backslash\{ x\}) + \delta.  
\end{equation*}
By the arbitrariness of $\delta$ we obtain \eqref{eq:cond_bound}.
\end{proof}
\begin{proof}[ of Theorem \ref{theo:1}]
Taking $A=\sX\backslash\{ x\}$ in Theorem \ref{theorem:peskun} gives 
$P^{(K)}( x, \sX\backslash\{ x\})\leq K \tilde{P}( x, \sX\backslash\{ x\})$ for all $ x\in\sX$ which, combined with Lemma \ref{lemma:conductance_bound}, implies
\begin{equation*}
Gap(P^{(K)}) \leq 2 \essinf_{ x\in\sX} P^{(K)}( x,\sX \backslash \{ x\})
\leq  2 K \essinf_{ x\in\sX} \tilde{P}( x,\sX \backslash \{ x\})\,.
\end{equation*}

The second term in the minimum function of \eqref{theo1:result1} can be obtained by considering the class of linear functions in the definition of spectral gap, as follows.
Let $ X_t\sim \pi$ and $ X_{t+1}| X_t\sim P^{(K)}( X_t,\cdot)$.
By definition of Algorithm \ref{alg:meta}, we have
$ X_{t+1}= Y_I$ with $( Y_1,\dots, Y_K)\sim Q( X_t,\cdot)$ being the $K$ candidate states proposed in the first step of the algorithm, and $I$ being the random index selected from $\{0,1,\dots,K\}$ in the second step of the algorithm, where $ Y_0= X_t$ by convention. 
It follows that $(\nu^{\top}( X_{t+1} - X_t))^2
\leq 
\max_{i = 1,\dots,K} (\nu^{\top}( Y_i - X_t))^2$.
Thus, from the definition of spectral gap we have
\begin{equation*}
\begin{aligned}
Gap(P^{(K)}) \leq &
\inf_{\nu \in \mathbb{R}^d,\,\|\nu\|=1} 
\frac{
E( (\nu^{\top}( X_{t+1} - X_t))^2)
}{
2 \mathrm{Var}(\nu^{\top}  X_t )
}
\leq
\inf_{\nu \in \mathbb{R}^d,\,\|\nu\|=1} 
\Biggl(\frac{E (\max_{i = 1,\dots,K} (\nu^{\top}( Y_i - X_t))^2)}{2 \mathrm{Var}(\nu^{\top}  X_t)} \Biggr)\,,
\end{aligned}
\end{equation*}
 as desired.
 Note that the above infimum should be intended over vectors $\nu$ such that 
 $\mathrm{Var}(\nu^{\top}  X_t)$ is in $(0,\infty)$.
\end{proof}

\subsection{Proof of Corollary \ref{coroll:mgf}} 
To prove the result, we use a standard trick for upper bounding the expectation of the maximum of positive random variables with finite moment generating function  \citep[see e.g.,][]{dasarathy2011simple}. 
Denoting
$A_{\nu,i}^2=(\nu^{\top}( Y_i - X_t))^2$ for brevity, 
it follows from Jensen inequality and $E \big( \exp\big( s A_{\nu,i}^2 )\big) \leq M^{\nu}(s)$ 
that 
\begin{equation} \label{help1:ub:nu}
\begin{aligned}
\exp\big( s E\big(\max_{i = 1,\dots,K} A_{\nu,i}^2 \big) \big) \leq &\, E \big( \max_{i= 1,\dots,K} \exp\big( s A_{\nu,i}^2 )\big)
\leq \sum_{i = 1}^K E \big( \exp\big( s A_{\nu,i}^2 )\big) \, \leq \, K M^{\nu}(s).
\end{aligned}
\end{equation}
By taking the logarithm of both sides of \eqref{help1:ub:nu} it follows
\begin{equation*}
E\Big(\max_{i = 1,\dots,K} A_{\nu,i}^2 \Big) \leq \frac{\log(K) + \log(M^{\nu}(s))}{s} .
\end{equation*}
The inequality in \eqref{theo1:result2} follows by combining the above with Theorem \ref{theo:1}.

\subsection{Proof of Theorem \ref{thm:GRW}}
\begin{proof}
Assumption \ref{cond:2} implies that $\tilde Q( x,\cdot)=K^{-1}\sum_{i=1}^KQ_i( x,\cdot)=N( x,\sigma^2\I_d)$.
Thus
\begin{equation} \label{help:gauss:acc:0}
\begin{aligned}
\tilde{P}( x,\sX \backslash \{  x \}) 
=& \int \min \Big \lbrace 1 , \frac{\pi( x + \sigma \varepsilon_i)}{ \pi( x)} \Big \rbrace \phi_d(\varepsilon; \mathbf{0}, \I_d) d\varepsilon,
\end{aligned}
\end{equation}
 with $\phi_d(\cdot; \mathbf{0}, \I_d)$ denoting the multivariate Gaussian probability density function with mean $\mathbf{0}$ and covariance matrix $\I_d$.
This is exactly the acceptance probability of standard MH with random walk Gaussian proposals, which under Assumption \ref{cond:1} can be bounded as 
\begin{equation*}
\essinf_{ x \in \sX} \tilde{P}( x, \sX\backslash\{ x\}) \leq (1 + m \sigma^2)^{-d/2},
\end{equation*}
see e.g.\ the proof of Proposition 45 of \cite{andrieu2022explicit}. By \eqref{theo1:result2}, this implies
\begin{equation} \label{GRW:help:1stpart1}
\mathrm{Gap}(P^{(K)}) \leq 2K(1 + m \sigma^2)^{-d/2}.
\end{equation}
We now study the quantity 
\begin{equation*}
\inf_{\nu \in \mathbb{R}^d} \frac{1}{ 2 s} \frac{ \log(K) + \log(M^{\nu}(s))}{ \mathrm{Var}(\nu^{\top}  X)}.
\end{equation*}
with $ X\sim \pi$.
To this end first note that, for every $\nu \in \mathbb{R}^d$, from Assumption \ref{cond:1} and \citet[Equation 10.25]{saumard2014log} it follows that $\mathrm{Var}(\nu^{\top}  X) \geq \|\nu\|^2/L= 1/L$. 
In addition, the random variable $A_{\nu,i} = \nu^{\top}( Y_i -  x)$ is $N(0, \sigma^2)$-distributed implying, in turn, $\sigma^{-2}A_{\nu,i}^2\sim   \chi^2_1$.
As a result, the moment generating function of $A_{\nu,i}^2$ is equal to
\begin{equation*}
 E\Big( \exp(s A_{\nu,i}^2) \Big) =   
\big( 1 - 2 \sigma^2 s \big)^{-1/2},
\end{equation*}
for $s<1/(2 \sigma^2)$. 
Choosing $s = (1-\exp(-1))/(2 \sigma^2)$ gives $\log M^{\nu}(s) = 0.5$ and, as a direct consequence, 
\begin{equation} \label{GRW:help:1stpart2}
\mathrm{Gap}(P) \leq \frac{\sigma^2 L (\log(K) + 0.5)}{1-\exp(-1)}.
\end{equation}
The combination of \eqref{GRW:help:1stpart1}, \eqref{GRW:help:1stpart2} and $1-\exp(-1)>0.5$ proves \eqref{thm:GRW:result}.

We now prove \eqref{thm:GRW:result2}.
First note that, if $\log(K)\geq d$, then the right-hand side in \eqref{thm:GRW:result2} is larger or equal than $2$ and thus the inequality holds trivially.
Thus, we can assume $\log(K)< d$ without loss of generality.
Using \eqref{thm:GRW:result}, the change of variables $r=m\sigma^2$, and $L/m\geq 1$, we have
\begin{align*}
 \sup_{\sigma\in\R^+}Gap(P^{(K)})  \leq &\,
2\sup_{\sigma\in\R^+}  
\min  \Biggl( \frac{K}{(1 + m \sigma^2)^{d/2}},  \sigma^2 L \left(\frac{1}{2}+\log(K)\right)\Biggr)
\\= &\,
2\sup_{r\in\R^+}   
\min  \left( \frac{K}{(1 + r)^{d/2}},  \frac{L}{m} r \left(\frac{1}{2}+\log(K)\right)\right)
\leq
2\frac{L}{m}\sup_{r\in\R^+}   
\min  \left(f_1(r),f_2(r)\right),
\end{align*} 
for $f_1(r)=\exp(b-a\log(1+r))$ and $f_2(r)=r\left(1/2+b\right)$, with 
$a=d/2$ and $b=\log(K)$.
We will bound the above term using Lemma \ref{lemma:two_functions} below, whose assumptions are satisfied because $f_1$ and $f_2$ are, respectively, strictly decreasing and increasing in $r$ and $f_2(r)/f_1(r)\to\infty$ as $r\to\infty$.
Taking $r_0=\exp((b+\log(a))/a)-1$ and using the inequality $\exp(x)-1\geq x$, we have
$$
f_2(r_0)
=
r_0(1/2+b)
\geq
\frac{(b+\log(a))(1/2+b)}{a}
\geq \frac{1}{a}
=f_1(r_0)\,,
$$
where we also used $(b+\log(a))(1/2+b)\geq 1$, which follows from $d,K> 2$.
We can thus apply Lemma \ref{lemma:two_functions} to deduce
$\sup_{r\in\R^+}   
\min  \left(f_1(r),f_2(r)\right)\leq f_2(r_0)$ and thus
\begin{align*}
\sup_{\sigma\in\R^+} Gap(P^{(K)})
&\leq
\frac{2L}{m}r_0\left(\frac{1}{2}+b\right)
=
\frac{2L}{m}\left(\exp\left(\frac{2(\log(K)+\log(d/2))}{d}\right)-1\right)\left(\frac{1}{2}+\log(K)\right)\,.
\end{align*} 
Then, since $\log(K)< d$ and $d>2$, it is easy to show that  $2(\log(K)+\log(d/2))/d\leq 1.04$ and thus using $\exp(x)-1\leq \exp(c) x$ for $x\in(0,c)$, we obtain
\begin{align*}
\exp\left(\frac{2(\log(K)+\log(d/2))}{d}\right)-1
&\leq
\exp(1.04)
\frac{2(\log(K)+\log(d/2))}{d}\,.
\end{align*} 
Combining the above, we obtain
\begin{align*}
\sup_{\sigma\in\R^+} Gap(P^{(K)})
&\leq
c
\frac{L}{m}
\frac{\log(K)+\log(d/2)}{d}\left(\frac{1}{2}+\log(K)\right)
\leq
c 
\frac{L}{m}
\frac{(\log(K)+\log(d))^2}{d}\,,
\end{align*} 
with $c=4\exp(1.04)$, where we used $\log(d)\geq 1$ for $d>2$.
\end{proof}
\begin{lemma}\label{lemma:two_functions}
Let $f_1,f_2:\R^+\to\R^+$ continuous functions $f_1$ strictly decreasing, $f_2$ strictly increasing, and $\lim\sup_{r\to\infty}f_2(r)/f_1(r)>1$. Then
$$
\sup_{r\in\R^+}\min\{f_1(r),f_2(r)\}\leq f_2(r_0),
$$
for any $r_0$ such that $f_1(r_0)\leq f_2(r_0)$.
\end{lemma}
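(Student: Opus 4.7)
The plan is to prove the lemma by a simple case split at the chosen point $r_0$, exploiting the opposite monotonicities of $f_1$ and $f_2$. The continuity and limsup hypotheses are not actually needed to derive the inequality once an $r_0$ with $f_1(r_0)\le f_2(r_0)$ is given; they merely guarantee that such an $r_0$ exists in the intended applications (and one can note this as a remark).

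First, I would fix an arbitrary $r_0\in\R^+$ satisfying $f_1(r_0)\le f_2(r_0)$ and claim that $\min\{f_1(r),f_2(r)\}\le f_2(r_0)$ for every $r\in\R^+$. For $r\le r_0$, strict monotonicity of $f_2$ (being increasing) yields $f_2(r)\le f_2(r_0)$, and hence
$$
\min\{f_1(r),f_2(r)\}\le f_2(r)\le f_2(r_0).
$$
For $r>r_0$, strict monotonicity of $f_1$ (being decreasing) gives $f_1(r)<f_1(r_0)$, and combining with the defining inequality $f_1(r_0)\le f_2(r_0)$ yields
$$
\min\{f_1(r),f_2(r)\}\le f_1(r)<f_1(r_0)\le f_2(r_0).
$$

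Having established the pointwise bound $\min\{f_1(r),f_2(r)\}\le f_2(r_0)$ for all $r\in\R^+$, I would conclude by taking the supremum over $r$ to obtain $\sup_{r\in\R^+}\min\{f_1(r),f_2(r)\}\le f_2(r_0)$, as desired.

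There is no real obstacle: the argument is two lines of case analysis. The only thing worth flagging is that the hypothesis $\limsup_{r\to\infty}f_2(r)/f_1(r)>1$ together with continuity and the opposite monotonicities ensures (via the intermediate value theorem applied to $f_2-f_1$) that an $r_0$ with $f_1(r_0)\le f_2(r_0)$ does in fact exist, so the statement is not vacuous; however this is background justification rather than part of the proof itself.
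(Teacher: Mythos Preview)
Your proof is correct. It differs slightly from the paper's own argument: the paper identifies the unique crossing point $r^*$ with $f_1(r^*)=f_2(r^*)$, asserts that $\sup_{r}\min\{f_1(r),f_2(r)\}=f_2(r^*)$, and then uses $r^*\le r_0$ together with the monotonicity of $f_2$ to conclude $f_2(r^*)\le f_2(r_0)$. Your direct case split at $r_0$ is more elementary and, as you rightly observe, does not actually require continuity or the $\limsup$ hypothesis; those assumptions are needed only in the paper's route to guarantee that $r^*$ exists (and, separately, that some admissible $r_0$ exists so the statement is non-vacuous). Both arguments are essentially one line, but yours is self-contained whereas the paper's implicitly appeals to the intermediate value theorem.
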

\begin{proof}
The statement follows from 
$
\sup_{r\in\R^+}\min\{f_1(r),f_2(r)\}= f_1(r^*)=f_2(r^*)\leq f_2(r_0)
$ where $r^*$ is the unique value in $\R^+$ such that $f_1(r^*)= f_2(r^*)$.
\end{proof}

\end{document}